\newcommand{\norm}[1]{%
	\| #1 \|}
\newcommand{\sigmax}{\sigma_{1}^*{}}
\newcommand{\sigmin}{\sigma_{r}^*{}}
\newcommand{\SE}{\mathrm{SD}}
\newcommand{\SEF}{\SE}
\newcommand{\x}{\bm{x}}
\newcommand{\y}{\bm{y}}
\newcommand{\g}{\bm{g}}
\renewcommand{\b}{\bm{b}}
\newcommand{\X}{\bm{X}}
\newcommand{\A}{\bm{A}}
\newcommand{\U}{{\bm{U}}}
\newcommand{\V}{\bm{V}}
\newcommand{\B}{\bm{B}}
\newcommand{\xstar}{\x^*}
\newcommand{\bstar}{\b^*}
\newcommand{\Xstar}{\X^*}
\newcommand{\Ustar}{\U^*}
\newcommand{\Vstar}{\V^*}
\newcommand{\Bstar}{\B^*}
\newcommand{\bSigma}{\bm{\Sigma}^*}
\newtheorem{theorem}{Theorem}
\newtheorem{cor}[theorem]{Corollary}
\newtheorem{assu}{Assumption}
\newtheorem{lemma}[theorem]{Lemma}
\newtheorem{fact}[theorem]{Fact}
\newcommand{\n}{\bm{v}}
\renewcommand{\S}{\bm{S}}
\newcommand{\ik}{{ki}}
\newcommand{\Xhat}{\hat\X}
\newcommand{\tC}{\tilde{C}}
\newcommand{\tc}{\tilde{c}}
\newcommand{\Q}{\bm{Q}}
\newcommand{\svdeq}{\overset{\mathrm{SVD}}=}
\newcommand{\sigmamin}{\sigma_{min}}
\newcommand{\ben}{\begin{enumerate}} \newcommand{\een}{\end{enumerate}} \newcommand{\bi}{\begin{itemize}} \newcommand{\ei}{\end{itemize}}
\newcommand{\indic}{\mathbf{1}}
\newcommand{\E}{\mathbb{E}}
\newcommand{\SD}{\bm{SD}}
\newcommand{\eps}{\epsilon}
\newcommand{\I}{\bm{I}}
\newcommand{\M}{\bm{M}}
\renewcommand{\a}{\bm{a}}
\newcommand{\e}{\bm{e}}
\newcommand{\G}{\bm{G}}
\newcommand{\Err}{\mathrm{Err}}
\newcommand{\Uhat}{\hat\U}
\newcommand{\D}{\bm{D}}
\newcommand{\Sigmastar}{\bm\Sigma^*}
\newcommand{\z}{\bm{z}}
\renewcommand\thetheorem{\arabic{section}.\arabic{theorem}}
\newcommand{\trnc}{\mathrm{trunc}}
\renewcommand{\Xhat}{\X}
\newcommand{\Bcheck}{\check{\V}}%{\check{\B}}
\newcommand{\w}{\bm{w}}
\newcommand{\qreq}{\overset{\mathrm{QR}}=} 
\newcommand{\bzeta}{\bm{\zeta}}
\newcommand{\s}{\bm{s}}
\begin{document}	
\title{Noisy Low Rank Column-wise Sensing}
\author{%
	\IEEEauthorblockN{Ankit Pratap Singh and Namrata Vaswani}\\
	\IEEEauthorblockA{Iowa State University, Ames, IA, USA}
}

\maketitle	
	
	\begin{abstract}
	   This letter studies the AltGDmin algorithm for solving the noisy low rank column-wise sensing (LRCS) problem. Our sample complexity guarantee improves upon the best existing one by a factor $\max(r,\log(1/\eps)) /r$ where $r$ is the rank of the unknown matrix and $\eps$ is the final desired accuracy.
A second contribution of this work is a detailed comparison of guarantees from all work that studies the exact same mathematical problem as LRCS, but refers to it by different names.
	\end{abstract}
\vspace{-0.2in}
\section{Introduction}
This work studies the Alternating Gradient Descent Minimization (AltGDmin) algorithm \cite{lrpr_gdmin} for solving the  noisy low rank column-wise sensing (LRCS) problem defined next.
LRCS finds important applications in federated sketching \cite{lee2019neurips}, accelerated dynamic MRI \cite{lrpr_gdmin_mri_jp}, and multi-task linear representation learning \cite{netrapalli}.
\vspace{-0.1in}
\subsection{Noisy LRCS Problem Setting}\label{problem}
The goal is to recover an $n \times q$ rank-$r$ matrix $\Xstar =[\xstar_1, \xstar_2, \dots, \xstar_q]$ from  $m$ noisy linear projections (sketches) of each of its $q$ columns, i.e. from
\begin{equation}\label{eq:prob}
	\y_k := \A_k \xstar_k + \n_k, \ k  \in [q]
\end{equation}
where each $\y_k$ is an $m$-length vector,  $[q]:=\{1,2,\dots, q\}$, the measurement/sketching matrices $\A_k$ are mutually independent and known, and noise $\n_k$ is an $m$-length vector which is independent of $\A_k$ and each entry of it is independent and identically distributed (i.i.d.) zero mean Gaussian with variance $\sigma_v^2$.
Each $\A_k$ is random-Gaussian: each entry of it is i.i.d. standard Gaussian. 
We denote the reduced Singular Value Decomposition (SVD) of the rank-$r$ matrix $\Xstar$ as
\[
\Xstar\svdeq\Ustar\Sigmastar\Vstar{}^\top=\Ustar\Bstar,
\]
and we let $\sigmax:=\sigma_{\max}(\Sigmastar)$ and $\sigmin:=\sigma_{\min}(\Sigmastar)$.
LRCS uses the following assumption \cite{lrpr_it,lrpr_best,lrpr_gdmin,lrpr_gdmin_2,netrapalli}.
\begin{assu}[Right Singular Vectors’ Incoherence]\label{right_incoh}
	Assume that right singular vectors' incoherence holds, i.e., $\max_{k \in [q]} \| \bstar_k \| \leq \mu \sqrt{r/q} \sigmax$ for a constant $\mu \geq 1$. %\sigma_1(\Xstar)
\end{assu}

\newcommand{\N}{\bm{V}}

We state guarantees in terms of the noise-to-signal ratio (NSR). NSR for LR recovery is the ratio between the maximum eigenvalue of $\E[\N \N^\top] = \sum_k \E[\n_k \n_k^\top] = q \sigma_v^2 \I$  to the minimum nonzero eigenvalue of $\Xstar \Xstar{}^\top$. This definition ensures that we are considering the ratio between the worst-case (largest) noise power in any direction to the smallest signal power in any direction. Since our ``signal'' matrix $\Xstar$ lies in an $r$-dimensional subspace of $\Re^n$,  the minimum nonzero eigenvalue of $\Xstar \Xstar{}^\top$ is its $r$-th eigenvalue which equals $\sigmin^2$. 
Noise power in any direction equals $q \sigma_v^2$. Thus,
\[
\mathrm{NSR}: =  \frac{q \sigma_v^2}{\sigmin{}^2}
\]

\vspace{-0.1in}
\subsubsection{Notation}\label{notat} We use $\|.\|_F$ to denote the Frobenius norm and $\|.\|$ without a subscript to denote the (induced) $l_2$ norm;  $^\top$ denotes matrix or vector transpose. We use $\e_k$ to denote the $k$-th canonical basis vector ($k$-th column of identity matrix $\I$); $|\bm{z}|$ for a vector $\bm{z}$ denotes element-wise absolute values;
and $\M^\dag = (\M^\top \M)^{-1} \M^\top$. We use $\indic_{\{a\leq b\}}$ to denote the indicator function that returns $1$ if $a\leq b$ otherwise $0$.
For tall matrices with orthonormal columns $\U_1, \U_2$, we use $\SD_2(\U_1, \U_2): = \|(\I - \U_1 \U_1{}^\top)\U_2\|$ as the Subspace Distance (SD) measure between the column spans of the two matrices. 
{\em We reuse the letters $c,C$ to denote different numerical constants in each use with the convention that $c < 1$ and $C \ge 1$.} {\em Whenever we use `with high probability (whp)' we mean with probability at least $1 - n^{-10}$.}

\vspace{-0.1in}
\subsection{Contribution}
In this work, we obtain guarantees for AltGDmin \cite{lrpr_gdmin} for solving the LRCS problem in a noisy setting. Our sample complexity guarantee improves upon the best existing one for the noisy case from \cite{netrapalli} by a factor $ \frac{\max(r,\log(1/\eps)) \log n}{r}$. Here $\eps$ is the final desired accuracy. It does this by extending the noise-free case result of \cite{lrpr_gdmin_2} for the noisy case. 
A second contribution of this work is a detailed comparison of guarantees from all work that studies the exact same mathematical problem as LRCS, but has been variously referred to as ``multi-task linear representation learning'', ``federated sketching of low rank matrices'' and ``low rank phase retrieval (LRPR)'' (which is its strict generalization).

\begin{table*}[h!]
	\centering
%	\resizebox{\linewidth}{!}{
	\begin{tabular}{|c|c|c|c|}
		\hline
		\textbf{Methods}                            & \textbf{Sample Comp. $mq\gtrsim$}   & \textbf{Time Complexity} & \textbf{Assumption} \\ \hline 
		AltMin \cite{lrpr_icml,lrpr_it}  & $nr^4 \log (\frac{1}{\eps})$ & $mnq\cdot r\log^2(\frac{1}{\eps})$ & RSV \& $\sigma_v = 0$ (only noise-free case) \\ \hline
	Convex \cite{lee2019neurips}                      & $\frac{nr}{\eps^4} \log^6 n \max\left( 1 ,\frac{\mathrm{NSR}}{r} \right)$
		                      & $mnq\cdot \min\left(\frac{1}{\sqrt{\eps}},n^3r^3\right)$    & RSV   \& $\norm{\Xstar}_{\mathrm{mixed}}\leq R$             \\ \hline 
		AltMin \cite{lrpr_best}  & $nr^2\max(r,\log (\frac{1}{\eps}))$ & $mnq\cdot r\log^2(\frac{1}{\eps})$ & RSV  \&  $\sigma_{\n}^2\leq \eps^2 \frac{\norm{\xstar_{k}}^2}{mr^2}$\\ \hline
AltGDmin \cite{lrpr_gdmin}        & $nr^2 \log (\frac{1}{\eps})$  &   ${mnq\cdot r\log(\frac{1}{\eps})}$    & RSV \& $\sigma_v = 0$ (only noise-free case)  \\ \hline
	FedRep \cite{collins2021exploiting}        & $ nr^3\log (n)\log(\frac{1}{\eps})$  &  ${mnq\cdot r\log(\frac{1}{\eps})}$     & RSV \& $\sigma_v = 0$ (only noise-free case)  \\ \hline
	AltMin \cite{netrapalli}                 & $nr^2\log n \max\left(1,\frac{\mathrm{NSR}^2}{r}, \log\left(\frac{1}{\eps}\right),r\frac{\mathrm{NSR}}{\eps^2}\log\left(\frac{1}{\eps}\right)\right)$                              &   $mnq\cdot r\log^2(\frac{1}{\eps})$   & RSV        \\ \hline
		AltMinGD \cite{netrapalli}                 & $nr^2 \log n \max\left(1,\frac{\mathrm{NSR}^2}{r},\log\left(\frac{1}{\eps}\right), r\frac{\mathrm{NSR}}{\eps^2}\log\left(\frac{1}{\eps}\right)\right)$                              &   $mnq\cdot r\log(\frac{1}{\eps})$  & RSV                \\ \hline 
		AltGDmin \cite{lrpr_gdmin_2}      &  $ nr\max\left(r,\log\left(\frac{1}{\eps}\right)\right)$   &  ${mnq\cdot r\log(\frac{1}{\eps})}$ & RSV \& $\sigma_v = 0$ (only noise-free case) \\ 
\hline		
		\textbf{AltGDmin}      &  $ {nr\max\left(r,\log\left(\frac{1}{\eps}\right),\frac{\mathrm{NSR}}{\eps^2}\log\left(\frac{1}{\eps}\right)\right)}$   &  ${mnq\cdot r\log(\frac{1}{\eps})}$ & RSV \\ 
		\textbf{(Proposed)} &&& \\
		\hline
	\end{tabular}
%}
\caption{\small\sl{Comparison with existing work on noisy LRCS. RSV refers to Right Singular Vectors’ (RSV) Incoherence given in Assumption \ref{right_incoh}. We compare sample and time complexity needed to guarantee $\SD_2(\Ustar,\U)\leq \eps$ w.p. at least $1-n^{-10}$. The table treats $\kappa,\mu$ as numerical constants and assumes $n \approx q$.
}}\label{noisy}
\vspace{-0.2in}
\end{table*}

\subsection{Related Work}
Early work on algorithms for LRCS or on its strict generalization, LR phase retrieval (LRPR), includes \cite{hughes_icml_2014,lrpr_tsp}. LRPR involves recovering $\Xstar$ from $\z_k:=|\y_k|$. Thus, any LRPR solution solves LRCS.
The first complete guarantee was provided in \cite{lrpr_icml,lrpr_it} where the authors studied LRPR and provided an alternating minimization (AltMin) based solution.
In parallel work \cite{lee2019neurips}, the LRCS problem was referred to as ``decentralized sketching of LR matrices'' and a convex relaxation based solution was introduced. Solving convex relaxations is known to be very slow. Also, its sample complexity has a bad ($1/\eps^4$) dependence on the final desired error $\eps$. 
Later work on LRPR provided an improved sample complexity guarantee for AltMin and also studied it in the noisy setting \cite{lrpr_best}.
In follow-up work \cite{lrpr_gdmin}, the authors introduced the AltGDmin algorithm, which is a GD based solution that is faster than AltMin. It is also more communication-efficient for a  distributed federated setting. In parallel work (appeared on ArXiv around the same time), the LRCS problem was referred to as ``multi-task linear representation learning'' and algorithms similar to AltGDmin were introduced \cite{collins2021exploiting,netrapalli} (the only difference between these and AltGDmin is the initialization step). The former only studied the noise-free setting, while the latter \cite{netrapalli} studied the noisy case. All these guarantees are compared in Table \ref{noisy}.

\vspace{-0.2in}
\section{AltGDmin algorithm}
Solving the LRCS problem requires solving
\begin{equation}\label{eq:opt}
	\min_{\substack{\U\in\Re^{n\times r}\\ \B \in\Re^{r\times q}}}f(\U,\B): =  \sum_{k=1}^{q}\norm{\y_k-\A_k \U \b_k}^2
\end{equation}
where $\B=[\b_1,...,\b_q]$.
AltGDmin \cite{lrpr_gdmin,collins2021exploiting,lrpr_gdmin_2,netrapalli}
proceeds as follows. We first initialize $\U$ as explained below; this is needed since the optimization problem is clearly non-convex. After this, at each iteration, we alternatively update $\U$ and $\B$ as follows:
(1) Keeping $\U$ fixed, update $\B$ by solving $\min_{\B} f(\U, \B)=\min_{\B} \sum_{k=1}^q \|\y_k - \A_k\U \b_k\|^2$. Clearly, this minimization decouples across columns, making it a cheap least squares problem of recovering $q$ different $r$ length vectors. It is solved as $\b_k \leftarrow (\A_k \U)^\dag \y_k$ for each $k \in [q]$.
(2) Keeping $\B$ fixed, update $\U$ by a GD step, followed by orthonormalizing its columns: 
$\U^+  \leftarrow QR(\U - \eta \nabla_{\U} f(\U,\B)))$. Here $\nabla_{\U} f(\U,\B)=\sum_{k \in [q]}\A_k^{\top} (\A_k \U \b_k - \y_k) \b_k^{\top}$, $\eta$ is the step-size for GD. 
Here we initialize $\U$ by using truncated spectral initialization i.e., computing the top $r$ singular vectors of
\[
\X_0 := \sum_k \A_k^\top (\y_{k})_\trnc \e_k^\top, \  \y_\trnc:=(\y \circ \indic_{|\y| \le  \sqrt\alpha}) 
\]
Here $ \alpha:= 9 \kappa^2 \mu^2 \sum_k \|\y_k\|^2 / mq$.  Here and below, $\y_\trnc$ refers to a truncated version of the vector $\y$ obtained by zeroing out entries of $\y$ with magnitude larger than $\alpha$.

\begin{algorithm}[!htb]
	\caption{\small{The AltGD-Min algorithm.  
}} 
	\label{gdmin}
	\begin{algorithmic}[1]
		%\small
		\STATE {\bfseries Input:} $\y_k, \A_k, k \in [q]$
		\STATE {\bfseries Sample-split:} Partition the measurements and measurement matrices into $2T+1$ equal-sized disjoint sets: one set for initialization and $2T$ sets for the iterations. Denote these by $\y_k^{(\tau)}, \A_k^{(\tau)}, \tau=0,1,\dots 2T$.
		
		\STATE {\bfseries Initialization:}
		\STATE Using $\y_k \equiv \y_k^{(0)}, \A_k \equiv \A_k^{(0)}$,
		set $\alpha = 9 \kappa^2 \mu^2 \frac{1}{mq}\sum_\ik\big|\y_\ik\big|^2$,
		\\
		$\displaystyle  \Xhat_{0}:= (1/m) \sum_{k \in [q]} \A_k^\top \y_{k,trunc}(\alpha) \e_k^\top$
		
		\STATE   Set $\U_0 \leftarrow $ top-$r$-singular-vectors of $\Xhat_0$
		\STATE {\bfseries GDmin iterations:}
		
		\FOR{$t=1$ {\bfseries to} $T$}
		
		\STATE  Let $\U \leftarrow \U_{t-1}$.
		\STATE {\bfseries Update $\b_k, \x_k$: } For each $k \in [q]$, set $(\b_k)_{t}  \leftarrow  (\A_k^{(t)} \U)^\dagger \y_k^{(t)}$ and set $(\x_k)_{t}    \leftarrow \U (\b_k)_{t}$

		\STATE {\bfseries Gradient w.r.t. $\U$: } With $\y_k \equiv \y_k^{(T+t)}, \A_k \equiv  \A_k^{(T+t)}$, compute $\nabla_\U f(\U, \B_t) =   \sum_k \A_k^\top (\A_k \U (\b_k)_t - \y_k) (\b_k)_t^\top$
		\STATE  {\bfseries GD step: } Set $\displaystyle \Uhat^+   \leftarrow \U - (\eta/m) \nabla_\U f(\U, \B_t)$.
		
		\STATE {\bfseries Projection step: }  Compute $\Uhat^+ \qreq \U^+ \bm{R}^+$.
		\STATE Set $\U_t \leftarrow \U^+$.
		\ENDFOR
	\end{algorithmic}
\end{algorithm}

\begin{cor}
	Consider Algorithm \ref{gdmin} and an $0 < \eps < 1$. 
	Set $\eta = c / \sigmax^2$ with a $c \le 0.5$, and $T = C \kappa^2 \log(1/\epsilon)$. 
	If
	\[
 mq \ge C   \kappa^6\mu^2 nr \max\left( \kappa^2 r,\log\left(\frac{1}{\eps}\right), \frac{\mathrm{NSR}}{\eps^2}\log\left(\frac{1}{\eps}\right),\mathrm{NSR}   \right)
\]
	 and $ m \ge C \max (\log q,r)\log (1/\eps)  \max\left(1, \frac{\mathrm{NSR}}{\eps^2 r}\right) ,$, then, w.p. at least $1 - (T+1) n^{-10},$
	\begin{align*}
		& \SD_2(\U_T,\Ustar) \le\eps, \text{ and }  \|\Xhat_T - \Xstar\|_F \le 1.4  \epsilon \|\Xstar\|
	\end{align*}
	\label{gdmin_thm}
\end{cor}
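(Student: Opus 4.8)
The plan is to run the familiar AltGDmin induction while carefully tracking where the measurement noise $\n_k$ enters each step. Let $\delta_t:=\|(\I-\U_t\U_t^\top)\Ustar\|_F$, the Frobenius subspace distance, so that $\SD_2(\U_t,\Ustar)\le\delta_t$; I run the recursion for this $\delta_t$ since its contraction argument is clean and it directly yields the Frobenius error bound at the end. The argument has three parts: (i) the truncated spectral step gives $\delta_0\le c_0/\kappa^2$ whp; (ii) a one-step analysis gives $\delta_t\le(1-c/\kappa^2)\,\delta_{t-1}+b_0$ for a noise floor $b_0$; (iii) unrolling, together with a union bound over the $T$ iterations, gives $\delta_T\le\eps$, from which both claims follow. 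The sample splitting in Step 2 of Algorithm \ref{gdmin} is what makes this work: at step $t$ I condition on $\U_{t-1}$ and treat the measurement matrices and noise vectors of that step as fresh and independent of it.

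\emph{Initialization.} With $\y_k=\A_k\xstar_k+\n_k$, the threshold $\alpha=9\kappa^2\mu^2\tfrac1{mq}\sum_{ki}|\y_{ki}|^2$ concentrates around $9\kappa^2\mu^2(\|\Xstar\|_F^2/q+\sigma_v^2)$, i.e.\ it correctly tracks per-column signal energy plus noise variance. Split $\Xhat_0=\E[\Xhat_0]+(\Xhat_0-\E[\Xhat_0])$: by Assumption \ref{right_incoh} the truncation biases each column only mildly, so $\E[\Xhat_0]=\beta\Xstar$ for a known $\beta\in(0,1]$ plus a noise-induced bias that is small when NSR is controlled; the deviation is bounded by a matrix-Bernstein / sub-exponential argument, the truncation being exactly what endows the summands with good enough moments. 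A Wedin/Davis--Kahan bound then gives $\delta_0\le c_0/\kappa^2$ provided $mq\gtrsim\kappa^6\mu^2 nr^2$, $mq\gtrsim\kappa^6\mu^2 nr\cdot\mathrm{NSR}$, and $m\gtrsim\max(\log q,r)$; these are the first and last terms of the outer max and part of the condition on $m$.

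\emph{Descent step.} Fix $t$ and condition on $\delta_{t-1}\le c_0/\kappa^2$. First the least-squares update: since $m\gtrsim r$ (and $m\gtrsim\log q$ for a union bound over $k$), every $\A_k^{(t)}\U$ is well-conditioned whp, so $(\b_k)_t=(\A_k^{(t)}\U)^\dag(\A_k^{(t)}\xstar_k+\n_k^{(t)})$ decomposes into the component of $\xstar_k$ in $\mathrm{span}(\U)$, a model-mismatch term of size $\lesssim\sqrt{r/m}\,\delta_{t-1}\|\bstar_k\|$, and a noise term of size $\lesssim\sigma_v\sqrt{r/m}$ per column; hence $\sigma_{\min}(\B_t)^2\ge c\sigmin^2$, $\|\B_t\|\le C\sigmax$, and both column-wise and aggregate bounds on $\|\xhat_k-\xstar_k\|$. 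Next, using the fresh set $\A_k^{(T+t)},\n_k^{(T+t)}$, write $\tfrac1m\nabla_\U f(\U,\B_t)=\tfrac1m\sum_k\A_k^\top\A_k(\U\b_k-\xstar_k)\b_k^\top-\tfrac1m\sum_k\A_k^\top\n_k\b_k^\top$, whose conditional expectation is $\U\B_t\B_t^\top-\Xstar\B_t^\top$. The key identity is $(\I-\pustar)\Xstar\B_t^\top=0$, so $(\I-\pustar)(\U-\eta\,\E[\tfrac1m\nabla_\U f(\U,\B_t)])=(\I-\pustar)\U(\I-\eta\B_t\B_t^\top)$, whose Frobenius norm is $\le(1-\eta\sigma_{\min}(\B_t)^2)\delta_{t-1}\le(1-c\eta\sigmin^2)\delta_{t-1}$. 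To this I add (a) the gradient's deviation from its expectation, bounded by a concentration estimate uniform over all $\U$ with $\|(\I-\U\U^\top)\Ustar\|_F\le c_0/\kappa^2$ — exploiting that $\U\B_t-\Xstar$ has rank $\le2r$ with $k$-th column of size $\lesssim\delta_{t-1}\|\bstar_k\|$, which is where the $nr$ (rather than $nr^2$) scaling is won — giving a signal term $\lesssim\epsilon_1\delta_{t-1}\sigmax^2$ (made small by $mq\gtrsim\kappa^8\mu^2 nr^2$), plus the noise term $\tfrac1m\|\sum_k\A_k^\top\n_k\b_k^\top\|_F\lesssim\mu\,\sigma_v\|\Xstar\|_F\sqrt{n\log(1/\eps)/m}$; and (b) the QR denominator bound $\sigma_{\min}(\Uhat^+)\ge c_1$. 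Collecting terms,
\[
\delta_t\ \le\ (1-c\eta\sigmin^2)\,\delta_{t-1}\ +\ C\,\kappa\mu\sqrt{\frac{\mathrm{NSR}\cdot nr\,\log(1/\eps)}{mq}}.
\]

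\emph{Wrap-up and main obstacle.} With $\eta=c/\sigmax^2$ the contraction factor is $1-c/\kappa^2$, so after $T=C\kappa^2\log(1/\eps)$ steps the homogeneous part of $\delta_T$ (hence of $\delta_{T-1}$) is $\le0.5\eps$, and the geometric sum of the noise floor is $\lesssim\kappa^3\mu\sqrt{\mathrm{NSR}\cdot nr\log(1/\eps)/(mq)}$, which is $\le0.5\eps$ exactly when $mq\gtrsim\kappa^6\mu^2 nr\cdot\mathrm{NSR}\log(1/\eps)/\eps^2$ — the third term of the outer max; the remaining $\log(1/\eps)$ term is what the uniform gradient concentration costs over the $T$ rounds, and the second (per-$m$) condition is what keeps $\A_k^{(t)}\U$ conditioned for every $k$ and the per-column least-squares noise below $\eps$. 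A union bound over the initialization and the $T$ iteration events gives probability $\ge1-(T+1)n^{-10}$; then $\SD_2(\U_T,\Ustar)\le\delta_T\le\eps$, and writing $\Xhat_T-\Xstar=\U_{T-1}(\B_T-\U_{T-1}^\top\Xstar)-(\I-\U_{T-1}\U_{T-1}^\top)\Xstar$, the second term has Frobenius norm $\le\delta_{T-1}\|\Xstar\|\le\eps\|\Xstar\|$ (this is where running the recursion for the Frobenius subspace distance pays off) while the first is the accumulated least-squares error, whose model-mismatch and noise pieces ($\sigma_v\sqrt{qr/m}$ among them) are both $\le0.4\,\eps\|\Xstar\|$ under the stated sample conditions, giving $\|\Xhat_T-\Xstar\|_F\le1.4\,\eps\|\Xstar\|$. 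I expect the main obstacle to be the gradient step: grafting the noise cross-term $\tfrac1m\sum_k\A_k^\top\n_k\b_k^\top$ onto the tight ($nr$, not $nr^2$) noise-free concentration of \cite{lrpr_gdmin_2}, getting its dependence on $r$, $\mathrm{NSR}$ and $\log(1/\eps)$ sharp, all while using that $(\b_k)_t$ is a function of the set-$t$ randomness and hence independent of the set-$(T+t)$ randomness appearing in the gradient.
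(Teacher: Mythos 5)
The paper's own proof of Corollary~\ref{gdmin_thm} (Appendix~\ref{main_thm}) is a short bookkeeping argument: invoke Theorem~\ref{init_thm} with $\delta_0 = c/\sqrt{r}\kappa^2$ to get the required per-initialization budget $m_0q$, invoke Theorem~\ref{iters_thm} for the per-iteration budget $m_tq$, set $T = C\kappa^2\log(1/\eps)$ so that the homogeneous part of $\delta_T$ is below $\eps/2$, and then add up the sample complexity over the $2T+1$ disjoint splits, $m = m_0 + m_1T$. Your proposal instead re-derives the whole AltGDmin analysis. That is a legitimate thing to do, but it is worth flagging that the structure of the paper already isolates the two hard results (initialization and one-step descent) as separate theorems, and the corollary is merely their combination plus sample-split accounting.

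The one genuine conceptual gap in your sketch is the claim that ``the remaining $\log(1/\eps)$ term is what the uniform gradient concentration costs over the $T$ rounds.'' This contradicts what you correctly observed two paragraphs earlier: Algorithm~\ref{gdmin} sample-splits, so at iteration $t$ the measurement matrices $\A_k^{(t)},\A_k^{(T+t)}$ and noise $\n_k^{(t)},\n_k^{(T+t)}$ are fresh and independent of $\U_{t-1}$. Consequently no uniform-over-$\U$ argument is needed; each iteration has its own high-probability event (conditioned on the previous $\U$), and a union bound over the $T+1$ events is what produces the stated $1-(T+1)n^{-10}$. The $\log(1/\eps)$ in the sample-complexity bound comes from a completely different place: the total per-column budget $m$ is partitioned into $2T+1 = \Theta(\kappa^2\log(1/\eps))$ equal disjoint chunks, so the per-iteration budget is $m/(2T+1)$ and the total is $m_0 + m_1T$. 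Your noise-floor expression $C\kappa\mu\sqrt{\mathrm{NSR}\cdot nr\log(1/\eps)/(mq)}$ happens to have a $\log(1/\eps)$ in the right place, but for the wrong reason, which means the $\kappa$-exponent and the treatment of the per-iteration chunk size are not being tracked consistently; following the paper's $m = m_0 + m_1T$ accounting is cleaner and is what actually matches the exponents in the corollary.

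A secondary difference: you run the recursion for the Frobenius subspace distance $\|(\I-\U_t\U_t^\top)\Ustar\|_F$, while the paper tracks the spectral distance $\SD_2$. This is a genuinely different route and has a real advantage (the bound $\|(\I-\U\U^\top)\Xstar\|_F \le \|(\I-\U\U^\top)\Ustar\|_F\,\sigmax$ avoids the $\sqrt r$ that appears in the paper's Lemma~\ref{Blemma_new1} item~3), but note two things. First, you would need Frobenius analogs of both the algebra lemma (Lemma~\ref{sd_pop}) and the initialization Wedin bound; the paper has neither, so this is not a drop-in substitution. Second, Theorem~\ref{init_thm} guarantees a bound on $\SD_2(\Ustar,\U_0)$, and to guarantee $\|(\I-\U_0\U_0^\top)\Ustar\|_F \le c_0/\kappa^2$ from a spectral bound you need $\SD_2 \le c_0/(\sqrt r\kappa^2)$, which reintroduces exactly the $\sqrt r$ you were hoping to avoid; so your claimed initialization requirement $mq \gtrsim \kappa^6\mu^2 nr^2$ is off by a $\kappa^2$ from what the paper's init theorem actually demands for the $\delta_0 = c/\sqrt r\kappa^2$ target (it is $\kappa^8\mu^2 nr^2$, consistent with the corollary's $\kappa^6\mu^2 nr\cdot\kappa^2 r$).
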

\vspace{-0.2in}
The above result is a direct consequence of the following two results. We provide the proof in Appendix \ref{main_thm}. %of ArXiv version of this work \cite{noisy_lrcs}.

\begin{theorem}[Initialization]
	\label{init_thm}
		Pick a $\delta_0 < 0.1$. If $mq> C\mu^2\kappa^2\left(nr\frac{\kappa^2}{\delta_0^2}+\frac{n}{\delta_0^2}\mathrm{NSR}\right)$, then 
w.p. at least $1 - \exp(-c (n+q))$,
	\[
	\SD_2(\Ustar, \U_0) \le \delta_0.
	\]
\end{theorem}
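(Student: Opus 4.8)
\textit{Proof proposal.}
The plan is to use the by-now-standard template for truncated spectral initialization (as in \cite{lrpr_best,lrpr_gdmin_2}), isolating the scalar variance terms so that the noise enters only through $\sigma_v^2$, i.e.\ through the $\mathrm{NSR}$. Let $\beta:=\frac1{mq}\sum_{k,i}|\y_{ki}|^2$, so $\alpha=9\kappa^2\mu^2\beta$. Each $\y_{ki}$ is zero-mean Gaussian with variance $\norm{\xstar_k}^2+\sigma_v^2$, hence $\E[\beta]=\frac1q\norm{\Xstar}_F^2+\sigma_v^2$ and $\beta$ is an average of $mq$ independent sub-exponential variables whose norms are controlled via $\norm{\xstar_k}^2\le\mu^2 r\sigmax^2/q$ (Assumption \ref{right_incoh}). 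A scalar Bernstein bound then gives $\beta\in[0.9,1.1]\,\E[\beta]$, hence $\alpha\in[\alpha_-,\alpha_+]$ for explicit $\alpha_\pm$, with probability $\ge1-\exp(-c(n+q))$ once $mq\ge C(n+q)$. Because $\alpha$ is a function of all the data, one actually proves the remaining bounds uniformly over all \emph{deterministic} thresholds in $[\alpha_-,\alpha_+]$ (using monotonicity in the threshold) and then invokes $\alpha\in[\alpha_-,\alpha_+]$; this restores independence of the per-entry summands used below. Note that, using $\norm{\Xstar}_F^2=\sum_k\norm{\xstar_k}^2\ge r\sigmin^2$ together with Assumption \ref{right_incoh}, the factor $9\kappa^2\mu^2$ forces $\sqrt\alpha\ge 3\max_k\sqrt{\norm{\xstar_k}^2+\sigma_v^2}$ on this event, regardless of $\mathrm{NSR}$.

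Next, for any fixed threshold $\alpha$ in this range I would compute $\E[\X_0]$. Writing the $i$-th row of $\A_k$ as $\a_{ki}$ and $u_k:=\xstar_k/\norm{\xstar_k}$, the entry $\y_{ki}=\a_{ki}^\tr\xstar_k+n_{ki}$ depends on $\a_{ki}$ only through $\a_{ki}^\tr u_k$, so the component of $\a_{ki}$ orthogonal to $u_k$ averages to zero in $\E[\a_{ki}\,\y_{ki,\trnc}]$, leaving $\E[\a_{ki}\,\y_{ki,\trnc}]=\gamma_k u_k$ with $\gamma_k=\E[(\a_{ki}^\tr u_k)\,\y_{ki}\indic_{|\y_{ki}|\le\sqrt\alpha}]$. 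Hence $\E[\X_0]=\Xstar\D$ \emph{exactly}, with $\D=\mathrm{diag}(\gamma_k/\norm{\xstar_k})$. A one-dimensional Gaussian moment computation shows $\gamma_k=\norm{\xstar_k}$ minus a correction governed by the probability that a $\mathcal N(0,\norm{\xstar_k}^2+\sigma_v^2)$ variable exceeds $\sqrt\alpha$ in magnitude; since $\sqrt\alpha$ dominates that standard deviation by the fixed factor $\ge 3$ for every $k$, this correction is a small constant, so $D_{kk}\ge c_0$ for a fixed $c_0$ bounded away from $0$. In particular $\E[\X_0]=\Xstar\D$ has column span exactly $\mathrm{span}(\Ustar)$ and $r$-th singular value $\ge c_0\sigmin$, so by Wedin's $\sin\Theta$ theorem $\SD_2(\Ustar,\U_0)\le 2\norm{\X_0-\E[\X_0]}/(c_0\sigmin)$ as soon as $\norm{\X_0-\E[\X_0]}\le c_0\sigmin/2$. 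It therefore suffices to make $\norm{\X_0-\E[\X_0]}\le c\delta_0\sigmin$.

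For the fluctuation term I would use that the $q$ columns of $\X_0$ are independent and bound the operator norm by an $\epsilon_0$-net over $S^{n-1}\times S^{q-1}$, of $\log$-cardinality $C(n+q)$. For fixed unit $\w\in\Re^n,\z\in\Re^q$, $\w^\tr(\X_0-\E[\X_0])\z=\sum_{k,i}(X_{ki}-\E X_{ki})$ with $X_{ki}:=\tfrac1m z_k(\w^\tr\a_{ki})\,\y_{ki,\trnc}$; the hard truncation gives $|\y_{ki,\trnc}|\le\sqrt\alpha$, so $|X_{ki}|\le\tfrac{\sqrt\alpha}{m}|z_k|\,|\w^\tr\a_{ki}|$ is a fixed multiple of a Gaussian, i.e.\ sub-Gaussian with $\norm{X_{ki}}_{\psi_2}\lesssim\tfrac{\sqrt\alpha}{m}|z_k|$, and the centered sum of the $mq$ independent terms is sub-Gaussian with variance proxy $\lesssim\tfrac{\alpha}{m^2}\sum_k m z_k^2=\tfrac{\alpha}{m}$. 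A sub-Gaussian tail bound and the union bound give $\norm{\X_0-\E[\X_0]}\lesssim\sqrt{\alpha(n+q)/m}$ whp. Since on the conditioning event $\alpha\asymp\kappa^2\mu^2(\tfrac1q\norm{\Xstar}_F^2+\sigma_v^2)\lesssim\kappa^2\mu^2\tfrac{\sigmin^2}{q}(\kappa^2 r+\mathrm{NSR})$ (using $\norm{\Xstar}_F^2\le r\sigmax^2$ and $\sigma_v^2=\mathrm{NSR}\,\sigmin^2/q$), the requirement $\sqrt{\alpha(n+q)/m}\le c\delta_0\sigmin$ becomes $mq\gtrsim\kappa^2\mu^2(\kappa^2 r+\mathrm{NSR})(n+q)/\delta_0^2$, which with $n\approx q$ is exactly the stated condition. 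Combining the two estimates via Wedin's theorem proves the claim, with overall failure probability $\exp(-c(n+q))$.

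The main obstacle, as usual for truncated initialization, is the interplay between the data-dependent threshold $\alpha$ and the independence structure: one must set up the ``condition on $\alpha\in[\alpha_-,\alpha_+]$, then argue uniformly over deterministic thresholds'' step carefully, and one must check that $\alpha\propto\kappa^2\mu^2(\text{per-entry signal}+\text{noise power})$ sits at the right scale — large enough that truncation retains essentially all the signal (so $\D$ stays near $\I$, $c_0$ is bounded away from $0$, and $\E[\X_0]$ keeps the exact column span of $\Ustar$), which is where the uniform control of $\norm{\xstar_k}$ from Assumption \ref{right_incoh} is used; but not so large that the variance proxy $\alpha/m$ — which drives the $nr\kappa^2/\delta_0^2$ and $n\,\mathrm{NSR}/\delta_0^2$ terms — blows up. Truncation is unavoidable because the untruncated summands involve $\a_{ki}^\tr\xstar_k$ times a Gaussian and are only sub-exponential, costing extra logarithmic factors. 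Beyond this, the only genuinely new bookkeeping relative to the noise-free analysis of \cite{lrpr_gdmin_2} is carrying $\sigma_v^2$ through $\E[\beta]$, through $\gamma_k$, and through the variance proxy, which is precisely what produces the additive $n\,\mathrm{NSR}/\delta_0^2$ term.
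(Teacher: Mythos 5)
Your proposal is correct and follows essentially the same route as the paper's proof: control the data-dependent threshold $\alpha$ to a fixed interval, show $\E[\X_0\mid\alpha]=\Xstar\D(\alpha)$ with diagonal $\D$ whose entries $\w_k(\alpha)=\E[\bzeta^2\indic_{|\bzeta|\le\sqrt{\alpha}/\sqrt{\norm{\xstar_k}^2+\sigma_v^2}}]$ are bounded below by a constant because $\sqrt{\alpha}\gtrsim 3\sqrt{\norm{\xstar_k}^2+\sigma_v^2}$, bound the fluctuation $\norm{\X_0-\E[\X_0\mid\alpha]}\lesssim\sqrt{\alpha(n+q)/m}$ via a sub-Gaussian plus $\eps$-net argument exploiting the hard truncation $|\y_{ki,\trnc}|\le\sqrt\alpha$, and finish with Wedin's $\sin\Theta$ theorem, arriving at the identical sample complexity $mq\gtrsim\mu^2\kappa^2(n+q)(\kappa^2 r+\mathrm{NSR})/\delta_0^2$. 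The only cosmetic differences are that you use the plain operator-norm form of Wedin while the paper uses the $\max(\|(\cdot)^\top\Ustar\|_F,\|(\cdot)\Bcheck^\top\|_F)$ form (both give the same order here), and you frame the $\alpha$-dependence as uniformity over deterministic thresholds while the paper conditions on $\alpha$ and unions over the event $\alpha\in\mathcal{E}$ — equivalent resolutions of the same technical point.
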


\begin{theorem}[GD]
	\label{iters_thm}
	If $\SD_2(\Ustar, \U_0) \le \delta_0 = c/\sqrt{r}\kappa^2$, $\eta \le 0.5 /  \sigmax^2$, and if at each iteration $t$,
	$m q\geq\left( C_1 \kappa^4\mu^2(n+r)r + C_2\frac{(n+r)r}{\eps^2}\mathrm{NSR}\mu^2\kappa^6\right)$, and $m \gtrsim \max (\log q,r) \max(1,\frac{\mathrm{NSR}}{\eps^2r })$, 
then w.p. at least $1 - t n^{-10}$, 
	$$
	\SD_2(\Ustar, \U_{t+1}) \le \delta_{t+1}:= \left(1 - 0.5 \eta \sigmin{}^2 \right)^{t}\delta_0 + \eps. 
	$$
\end{theorem}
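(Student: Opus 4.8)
The plan is to prove a one‑step bound
$\SD_2(\Ustar,\U_t)\le(1-0.5\,\eta\sigmin^2)\,\SD_2(\Ustar,\U_{t-1})+0.5\,\eta\sigmin^2\,\eps$, valid on an event of probability at least $1-n^{-10}$ conditioned on $\SD_2(\Ustar,\U_{t-1})\le\delta_{t-1}$, and then to union‑bound over the iterations and unroll it: since $0.5\,\eta\sigmin^2\sum_{j\ge0}(1-0.5\,\eta\sigmin^2)^j=1$ and $(1-0.5\,\eta\sigmin^2)^{t+1}\le(1-0.5\,\eta\sigmin^2)^t$, iterating the one‑step bound yields the claimed $\delta_{t+1}$. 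Fix an iteration; write $\U:=\U_{t-1}$ and let $\B$, $\b_k$, $\x_k:=\U\b_k$ be the quantities computed on sample‑split $t$, and $\X:=\U\B$. Following the AltGDmin analysis of \cite{lrpr_gdmin_2}, the key point is that the gradient in Algorithm~\ref{gdmin} is evaluated on split $T+t$, independent of splits $0,\dots,T+t-1$ used to form $\U$ and $\B$; hence, conditioning on $\U,\B$, we may write $\tfrac1m\nabla_\U f(\U,\B)=(\X-\Xstar)\B^\top+\GradErr$, where the first term is the conditional mean and $\GradErr$ is a sum of $q$ matrices that are independent and mean‑zero given $\U,\B$. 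Using $(\I-\Ustar\Ustar^\top)\Xstar=\bm 0$ and $\X=\U\B$,
\[
(\I-\Ustar\Ustar^\top)\Uhat^+ = (\I-\Ustar\Ustar^\top)\U(\I-\eta\B\B^\top) - \eta(\I-\Ustar\Ustar^\top)\GradErr ,
\]
so, after the QR step, $\SD_2(\Ustar,\U_t)\le\big[\SD_2(\Ustar,\U)\,\|\I-\eta\B\B^\top\| + \eta\,\|\GradErr\|\big]\big/\sigma_{\min}(\Uhat^+)$.

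Second, I would control the least‑squares step. Conditioned on $\U$, each $\A_k^{(t)}\U$ is an $m\times r$ standard Gaussian matrix, so for $m\gtrsim\max(\log q,r)$ a union bound over $k\in[q]$ gives $0.9m\le\sigma_{\min}^2(\A_k^{(t)}\U)\le\|\A_k^{(t)}\U\|^2\le1.1m$ for all $k$ whp. Writing $\b_k=(\A_k^{(t)}\U)^\dag(\A_k^{(t)}\xstar_k+\n_k^{(t)})$, the noise‑free computation plus the extra term $\U(\A_k^{(t)}\U)^\dag\n_k^{(t)}$ gives $\|\x_k-\xstar_k\|\lesssim\delta_{t-1}\|\bstar_k\|+\sigma_v\sqrt{r/m}$ uniformly in $k$ whp; pushing the noise part of this per‑column error to the $\eps$‑scale (so that $\B$ stays close to $\Bstar$ and the gradient‑concentration variances stay at the signal scale) is exactly what the per‑column factor $\max(1,\mathrm{NSR}/(\eps^2r))$ in the lower bound on $m$ provides. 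Then, invoking Assumption~\ref{right_incoh} ($\|\bstar_k\|\le\mu\sqrt{r/q}\,\sigmax$), summing over $k$, and bounding the operator norm of the noise part of $\X-\Xstar$, one obtains $\|\X-\Xstar\|\le0.01\,\sigmin$; by Weyl, $\sigma_{\min}(\B)^2=\sigma_r(\X)^2\ge0.98\,\sigmin^2$ and $\|\B\|^2\le1.02\,\sigmax^2$, so with $\eta\le0.5/\sigmax^2$ we get $\|\I-\eta\B\B^\top\|\le1-0.9\,\eta\sigmin^2$ and $\sigma_{\min}(\Uhat^+)\ge1-c\,\delta_{t-1}^2$.

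Third — the main step — I would bound $\|\GradErr\|=\|\mathcal E_1+\mathcal E_2\|$, with $\mathcal E_1:=\tfrac1m\sum_k(\A_k^\top\A_k-\I)(\x_k-\xstar_k)\b_k^\top$ and $\mathcal E_2:=-\tfrac1m\sum_k\A_k^\top\n_k\b_k^\top$, where $\A_k\equiv\A_k^{(T+t)}$, $\n_k\equiv\n_k^{(T+t)}$ are independent of $\{\x_k,\b_k\}$. For $\mathcal E_2$: conditioning on $\{\b_k,\n_k\}$, $w^\top\mathcal E_2 z$ is, for fixed unit $w\in\Re^n,z\in\Re^r$, zero‑mean Gaussian with variance $\tfrac1{m^2}\sum_k\|\n_k\|^2(\b_k^\top z)^2\lesssim\tfrac{\sigma_v^2}{m}z^\top\B\B^\top z\lesssim\tfrac{\sigma_v^2\sigmax^2}{m}$ (using $\|\n_k\|^2\approx m\sigma_v^2$, valid for $m\gtrsim\log q$); a net of size $e^{C(n+r)}$ over $(w,z)$ plus a union bound then gives $\|\mathcal E_2\|\lesssim\sigma_v\sigmax\sqrt{(n+r)/m}$ whp, with no extra $\log n$ factor, and requiring $\eta\|\mathcal E_2\|\le0.25\,\eta\sigmin^2\eps$, i.e. $\sigma_v^2\sigmax^2(n+r)/m\lesssim\sigmin^4\eps^2$, is (after $\sigma_v^2/\sigmin^2=\mathrm{NSR}/q$ and absorbing $\kappa,\mu$) the source of the $mq\gtrsim(n+r)r\,\mathrm{NSR}/\eps^2$ term. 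For $\mathcal E_1$: conditioning on $\{\x_k,\b_k\}$, each summand is a mean‑zero sub‑exponential bilinear form in the Gaussian $\A_k^{(T+t)}$, and the same net‑plus‑Bernstein argument bounds $\|\mathcal E_1\|$ by a small constant times $\big(\delta_{t-1}+\sigma_v\sqrt{r/m}/\min_k\|\bstar_k\|\big)\sigmax^2$ provided $mq\gtrsim(n+r)r\,\kappa^4\mu^2$; the $\delta_{t-1}\sigmax^2$ part degrades the contraction constant from $1-0.9\,\eta\sigmin^2$ to $1-0.5\,\eta\sigmin^2$ (using $\delta_{t-1}\le c/(\sqrt r\kappa^2)$), and the noise part again contributes to the $\mathrm{NSR}/\eps^2$ requirement. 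Combining, $\eta\|\GradErr\|\le0.4\,\eta\sigmin^2\delta_{t-1}+0.5\,\eta\sigmin^2\eps$, dividing by $\sigma_{\min}(\Uhat^+)\ge1-c\,\delta_{t-1}^2$ costs a negligible amount, and the one‑step bound, hence the theorem, follows.

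I expect the hard part to be exactly these two concentration bounds with the right polynomial dependence. One must be careful that the sample split genuinely makes $\mathcal E_1$ a sum of independent mean‑zero matrices (the fourth‑order Gaussian chaos $\A_k^\top\A_k$ acting on the $\A_k^{(t)}$‑dependent $\x_k,\b_k$ is the pitfall), and one must route the heavier sub‑exponential tails — the additive noise in $\mathcal E_2$ and the noise component of $\x_k-\xstar_k$ inside $\mathcal E_1$ — through the per‑column condition $m\gtrsim\max(\log q,r)\max(1,\mathrm{NSR}/(\eps^2r))$ rather than through the aggregate condition on $mq$. Obtaining simultaneously the $(n+r)r$ scaling and the $\mathrm{NSR}/\eps^2$ scaling in $mq$ while avoiding the extra $\log n$ factor present in \cite{netrapalli} is the whole point.
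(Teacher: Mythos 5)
Your proposal matches the paper's own proof essentially step for step: the same QR/subspace-distance algebra (your $\sigma_{\min}(\Uhat^+)$ denominator is exactly the paper's $1-\tfrac{\eta}{m}\|\E[\nabla f]\|-\tfrac{\eta}{m}\|\Err\|$), the same split of the gradient fluctuation into a noise-free piece and $\tfrac1m\sum_k\A_k^\top\n_k\b_k^\top$, the same Bernstein/Hoeffding-plus-$\eps$-net bounds on each, and the same routing of the least-squares noise error through the per-column lower bound on $m$ so that $\|\x_k-\xstar_k\|$ stays at the $\delta_t$ scale. The only slip is the parenthetical $\sigma_{\min}(\Uhat^+)\ge 1-c\,\delta_{t-1}^2$ — since $\|\E[\nabla f]\|/m\lesssim \sqrt{r}\,\delta_{t-1}\sigmax^2$, the shrinkage is linear in $\delta_{t-1}$, not quadratic — but under $\delta_0\le c/(\sqrt{r}\,\kappa^2)$ the denominator is still $1-O(\eta\sigmin^2)$, so your one-step contraction and its unrolling are unaffected.
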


\subsubsection{Discussion} \textcolor{blue}{Since $\eps<0.1$, $\mathrm{NSR} < \frac{\mathrm{NSR}}{\eps^2}\log\left(\frac{1}{\eps}\right)$. Thus the $\mathrm{NSR}$ factor can be removed from the sample complexity. Suppose that $\mathrm{NSR} \le \eps^2$ so that $\mathrm{NSR}/\eps^2 \le 1$.  In this case, the $\mathrm{NSR}$ value does not matter and the sample complexity is equal to that of the noise-free case \cite{lrpr_gdmin_2} i.e., $\max(\kappa^2 r, \log\left(\frac{1}{\eps}\right))$.} If $\mathrm{NSR} > \eps^2$, then we need a sample complexity that grows as $1/\eps^2$. This can become very large for small $\eps$ values and could even require $m \ge n$ , e.g., for $\eps = 1/\sqrt{n}$ or smaller. A similar dependency is also seen in the result of \cite{netrapalli}. Other work such as that of \cite{lrpr_best} avoids this issue by assuming a bound on the noise. We provide a comparison of our guarantee with all existing ones in Table \ref{noisy}. As can be seen our result improves upon the best existing for the noisy case \cite{netrapalli} by a factor of $ \frac{\max(r,\log(1/\eps)) \log n}{r}$. We extend the result of \cite{lrpr_gdmin_2} to the noisy case. Similar to \cite{lrpr_gdmin_2}, our proof also relies on direct (and easier) techniques \cite{versh_book}: use of sub-Gaussian Hoeffing or sub-exponential Bernstein inequality followed by an easy epsilon-net argument. This is what ensures we can get a tighter bound on the gradient terms. On the other hand, the result of \cite{netrapalli} applies the Hanson-Wright inequality.

\color{blue}

\subsubsection{Experiments}
We simulated $\Ustar$ by orthogonalizing an $n\times r$ standard Gaussian matrix; and the columns $\bstar_k$ were generated i.i.d. from $\mathcal{N}(0,I_r)$. We then set $\Xstar=\Ustar\Bstar$. This was done once (outside Monte Carlo loop). For \( 100 \) Monte Carlo runs, we generated matrices \( \A_k \) for \( k \in [q] \), each entry being i.i.d. standard Gaussian. Similarly, we generated i.i.d. \( \n_k \) for \( k \in [q] \) from $\mathcal{N}(0,\sigma_v^2I_m)$. We then set \( \y_k = \A_k \xstar_k + \n_k \) for \( k \in [q] \). We used step size $\eta=\frac{0.5}{\sigmax^2}$. In Figure \ref{fig_2} we compared our proposed Algorithm \ref{gdmin}, the Method of Moments (MoM) initialization \cite{netrapalli}, and a random initialization. As shown in Figure \ref{fig_2}, our proposed Algorithm \ref{gdmin} converges faster than both MoM and random initialization.
\begin{figure}[htbp]
	\vspace{-0.15in}
	\begin{center}
		\includegraphics[width=0.5\textwidth, height=0.6\linewidth]{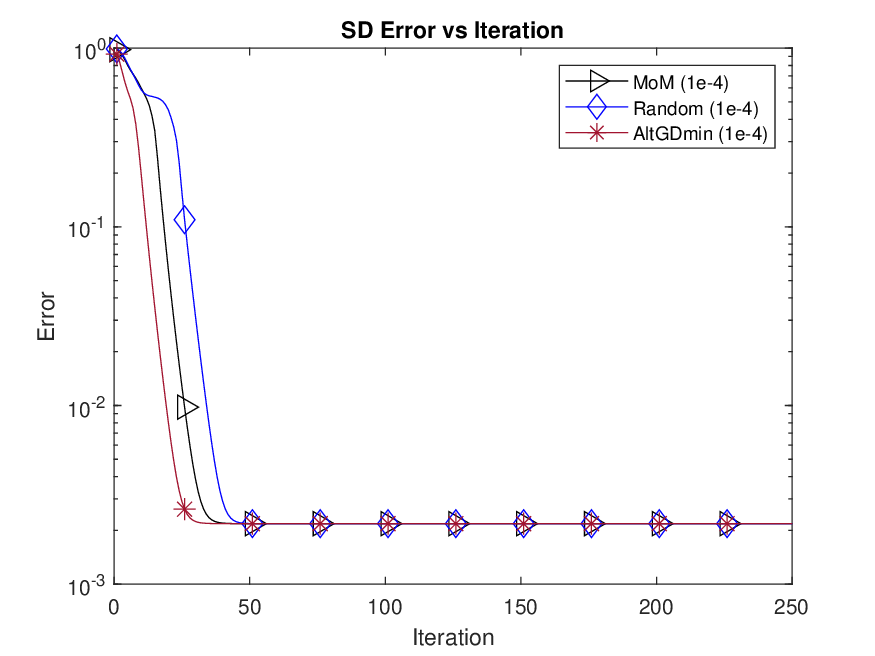}
	\end{center}
	\vspace{-0.5cm}
	\caption{\small $\SD_2(\U_t,\Ustar)$ vs Iteration $t$ with $n=600$, $m=30$, $q=600$, $r=4$, and $\sigma_v^2=10^{-4}$. We compared our proposed Algorithm \ref{gdmin} (AltGDmin), the Method of Moments (MoM) initialization \cite{netrapalli}, and a random initialization.}\label{fig_2}
\end{figure}

\color{black}

\vspace{-0.3in}
\section{Proof Outline}
The proof consists of two parts: first, we show that the distance between $\Ustar$ and $\U$ decreases each iteration, up to a term depending on $\mathrm{NSR}=\frac{q\sigma_{\n}^2}{\sigmin{}^2}$. Second, truncated spectral initialization provides a good initial $\U_0$ with $\SE(\U_0,\Ustar) = \delta_0 = 0.1/\sqrt{r}\kappa^2$.

\vspace{-0.2in}
\subsection{Proof outline for GD step Theorem \ref{iters_thm}}
We first analyze the minimization step for updating $\B$ and then the GD step w.r.t. $\U$.

Let  $\U^+:= \U_{t+1}$,  $\U:= \U_{t}, \B:=\B_{t+1}$, and $\X:= \U \B$.

\subsubsection{Analyzing the minimization step for updating \textbf{B}}
Let $\g_k:= \U^\top \xstar_k = (\U^\top \Ustar) \bstar_k$. We can write
\begin{align}
	\|\b_{k} - \g_k\| &\leq \|\left(\U{}^\top\A_k^\top\A_k\U\right)^{-1}\| \nonumber\\& \times ~\left( \|\U{}^\top\A_k^\top \A_k(\I-\U\U{}^\top)\xstar_{k}\| +  \|\U{}^\top\A_k^\top\n_k\| \right).
	\label{gk_bhatk_bnd_1}
\end{align}
Using proof technique from \cite[Lemma 3.3 part 1]{lrpr_gdmin} i.e., sub-exponential Bernstein inequality followed by an epsilon-net argument, one can bound each term in \eqref{gk_bhatk_bnd_1} and show the following

\begin{lemma}[Bound on $\|\g_k - \b_{k}\|$]
\label{Blemma_new2}	
	Assume that $\SD_2(\U,\Ustar) \le \delta_t$. 
	Recall that $\g_k:= \U^\top \xstar_k = (\U^\top \Ustar) \bstar_k$.
	If $m\gtrsim \max\left(\log q,r, \max(\log q,r) \frac{\mathrm{NSR}}{\eps^2 r }\right)$, then, w.p. at least $1-n^{-10}$, 
	\[\|\g_k - \b_{k}\|   \leq 0.4\delta_t \|\bstar_k\| 
\]
\end{lemma}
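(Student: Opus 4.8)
The plan is to control the three factors appearing on the right-hand side of \eqref{gk_bhatk_bnd_1}, each on an event of probability at least $1-n^{-10}$ after a union bound over $k\in[q]$, and then multiply. (Recall that \eqref{gk_bhatk_bnd_1} itself follows by writing $\y_k=\A_k\U\g_k+\A_k(\I-\U\U^\top)\xstar_k+\n_k$, substituting into $\b_k=(\A_k\U)^\dagger\y_k=(\U^\top\A_k^\top\A_k\U)^{-1}\U^\top\A_k^\top\y_k$, and using the triangle inequality.) For the inverse Gram factor: since $\U$ has orthonormal columns, $\A_k\U\in\Re^{m\times r}$ has i.i.d.\ $\mathcal N(0,1)$ entries, so $\U^\top\A_k^\top\A_k\U$ is an $r\times r$ Wishart matrix; the standard extreme-singular-value bound for Gaussian matrices \cite{versh_book} gives $\sigma_{\min}(\A_k\U)^2\ge 0.9\,m$, hence $\|(\U^\top\A_k^\top\A_k\U)^{-1}\|\le 1.1/m$, as long as $m\gtrsim\max(r,\log q)$ (the $\log q$, and the extra $\log n$ needed for ``whp'', coming from the union over $k$).

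For the signal cross term, set $\w_k:=(\I-\U\U^\top)\xstar_k=(\I-\U\U^\top)\Ustar\bstar_k$; this is orthogonal to the column span of $\U$ and $\|\w_k\|\le\SD_2(\U,\Ustar)\,\|\bstar_k\|\le\delta_t\|\bstar_k\|$. For a fixed unit $\z\in\Re^r$, $\z^\top\U^\top\A_k^\top\A_k\w_k=\sum_{j=1}^m(\z^\top\U^\top\a_{kj})(\a_{kj}^\top\w_k)$ is a sum of $m$ i.i.d.\ zero-mean sub-exponential summands (a product of two \emph{independent} Gaussians, since $\U\z\perp\w_k$), so the sub-exponential Bernstein inequality, calibrated so that a subsequent net over the unit sphere of $\Re^r$ (of cardinality $e^{Cr}$) and a union over $k$ still succeed, yields $\|\U^\top\A_k^\top\A_k\w_k\|\le C\sqrt{m\max(r,\log q)}\,\|\w_k\|\le C\sqrt{m\max(r,\log q)}\,\delta_t\|\bstar_k\|$; this is precisely the argument of \cite[Lemma 3.3, part 1]{lrpr_gdmin}. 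For the noise term, since $\n_k$ is independent of $\A_k$, conditioning on $\n_k$ makes $\U^\top\A_k^\top\n_k=\sum_j(\U^\top\a_{kj})(\n_k)_j$ a $\mathcal N(0,\|\n_k\|^2\I_r)$ vector; pairing the Gaussian-norm bound with the $\chi^2$-concentration $\|\n_k\|^2\le 1.1\,m\sigma_v^2$ and a union over $k$ gives $\|\U^\top\A_k^\top\n_k\|\le C\sqrt{m\max(r,\log q)}\,\sigma_v$ (equivalently, one may rerun the Bernstein-plus-net argument just above).

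Multiplying the three bounds yields $\|\g_k-\b_k\|\le C\sqrt{\max(r,\log q)/m}\,\big(\delta_t\|\bstar_k\|+\sigma_v\big)$. Choosing $m\ge C_0\max(r,\log q)$ with $C_0$ a large enough constant forces the prefactor below $0.2$, which disposes of the signal part. For the noise part, using $m\gtrsim\max(r,\log q)\,\mathrm{NSR}/(\eps^2 r)$ together with $\mathrm{NSR}=q\sigma_v^2/\sigmin{}^2$ turns $C\sqrt{\max(r,\log q)/m}\,\sigma_v$ into an error of order $\eps\sqrt{r/q}\,\sigmin$ — i.e.\ $\eps$ times the scale $\sqrt{r/q}\,\sigmax$ that Assumption~\ref{right_incoh} attaches to the column norms $\|\bstar_k\|$. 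I expect this last point to be the only genuinely delicate one: the noise contribution does \emph{not} scale with the individual $\|\bstar_k\|$, so reaching the clean form $\|\g_k-\b_k\|\le 0.4\delta_t\|\bstar_k\|$ requires either absorbing it into the constant whenever $\|\bstar_k\|$ is not atypically small, or, more safely, tracking it as a separate $O(\eps\sqrt{r/q}\,\sigmin)$ term, which aggregated over $k$ contributes only $O(\eps\|\Xstar\|_F)$ to $\|\B-\G\|_F$ and is harmless for the subspace-distance recursion of Theorem~\ref{iters_thm}. Apart from this bookkeeping, every step is routine Gaussian / sub-exponential concentration followed by a covering-net argument.
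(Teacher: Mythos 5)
Your proposal follows essentially the same route as the paper: the identity $\y_k=\A_k\U\g_k+\A_k(\I-\U\U^\top)\xstar_k+\n_k$ substituted into $\b_k=(\A_k\U)^\dagger\y_k$, a bound on the inverse Gram matrix via Gaussian extreme singular values, and sub-exponential Bernstein with an epsilon-net plus union over $k$ for the cross term and the noise term. The only cosmetic difference is your calibration of the Bernstein threshold (scaling it as $\sqrt{m\max(r,\log q)}$ rather than the paper's $\epsilon m$ with $\epsilon$ a small constant); both give the same sample-complexity requirement and the same final constants after the product of the three factors.

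You also correctly flag the one genuine subtlety, and it is worth stating plainly: the noise piece $\frac{1}{m}\|\U^\top\A_k^\top\n_k\|$ does \emph{not} scale with the individual column norm $\|\bstar_k\|$, so the bound $\|\g_k-\b_k\|\le 0.4\,\delta_t\|\bstar_k\|$ as written in the lemma statement is not literally what the argument produces when $\|\bstar_k\|$ is atypically small. The paper's own appendix proof sets $\eps_2=\delta_t\mu\sqrt{r/q}\,\sigmax$ and ends at $\|\g_k-\b_k\|\le c_3\,\delta_t\,\mu\sqrt{r/q}\,\sigmax$, i.e.\ the bound is really against the incoherence \emph{upper bound} on $\|\bstar_k\|$ (Assumption~\ref{right_incoh}), not against $\|\bstar_k\|$ itself; Lemma~\ref{Blemma_new1} then immediately weakens to this form anyway, so nothing downstream breaks. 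Your alternative of keeping the noise error as a separate additive $O(\eps\sqrt{r/q}\,\sigmin)$ term is also sound and would propagate cleanly into $\|\B-\G\|_F$ and the subspace-distance recursion. In short: the proposal is correct, matches the paper's method, and correctly identifies a small overstatement in how the lemma is phrased relative to what is proved.
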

 We provide the proof in Appendix \ref{gd_thms}. % of ArXiv version of this work \cite{noisy_lrcs}.

Once we have a bound on $\|\b_k - \g_k\|$, using $\SE_2(\U,\Ustar) \le \delta_t$, we can prove \cite[Theorem 4]{lrpr_gdmin_2} as done there. We state it below.
\begin{lemma}[{\cite[Theorem 4]{lrpr_gdmin_2}}]
	Assume that $\SE_2(\U,\Ustar) \le \delta_t$. If $\delta_t \le c / \sqrt{r} \kappa^2$,
	and 	if $m$ satisfies the bound given in Lemma \ref{Blemma_new2}	 
	\ben
	\item $\|\b_k - \g_k\| \le 0.4\delta_t \|\bstar_k\| \le 0.4 \delta_t \mu\sqrt{\frac{r}{q}}\sigmax$,
\item $\|\x_k - \xstar_k\| \le 1.4  \delta_t \mu\sqrt{\frac{r}{q}}\sigmax$ and  $\|\b_k\| \le 1.1\mu\sqrt{\frac{r}{q}}\sigmax$

	\item $\|\B - \G\|_F \le  0.4 \sqrt{r} \delta_t \sigmax $, $\|\X - \Xstar\|_F \le 1.4 \sqrt{r} \delta_t \sigmax $,

	\item $\sigmamin(\B) \ge  0.8 \sigmin$, $\sigma_{\max}(\B) \le  1.1 \sigmax$
	\een
	(only item 4 bounds require the upper bound on $\delta_t$).
	\label{Blemma_new1}
\end{lemma}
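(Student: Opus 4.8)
The plan is to derive all seven bounds deterministically from the single estimate $\norm{\g_k-\b_k}\le 0.4\delta_t\norm{\bstar_k}$ of Lemma~\ref{Blemma_new2} (valid for every $k$ on the high-probability event of that lemma, as soon as $m$ meets its stated bound), together with Assumption~\ref{right_incoh} and elementary triangle/Weyl perturbation arguments; only item~4 additionally uses $\delta_t\le c/\sqrt r\kappa^2$. Item~1 is immediate: the first inequality is exactly Lemma~\ref{Blemma_new2}, and $\norm{\bstar_k}\le\mu\sqrt{r/q}\,\sigmax$ is Assumption~\ref{right_incoh}.

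For the remaining per-column bounds (item~2), I would write, using $\U\g_k=\U\U^\top\xstar_k$,
\[
\x_k-\xstar_k=\U\b_k-\Ustar\bstar_k=\U(\b_k-\g_k)-(\I-\U\U^\top)\xstar_k .
\]
Since $(\I-\U\U^\top)\xstar_k=(\I-\U\U^\top)\Ustar\bstar_k$, its norm is at most $\SE_2(\U,\Ustar)\norm{\bstar_k}\le\delta_t\norm{\bstar_k}$, so the triangle inequality with item~1 gives $\norm{\x_k-\xstar_k}\le(0.4+1)\delta_t\norm{\bstar_k}=1.4\delta_t\norm{\bstar_k}$. Likewise $\norm{\g_k}\le\norm{\U^\top\Ustar}\,\norm{\bstar_k}\le\norm{\bstar_k}$, whence $\norm{\b_k}\le(1+0.4\delta_t)\norm{\bstar_k}\le 1.1\norm{\bstar_k}$ (for $\delta_t$ small, which holds throughout). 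Plugging Assumption~\ref{right_incoh} into each bound finishes item~2.

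Item~3 follows by squaring the per-column bounds and summing over $k\in[q]$: since $\sum_k\norm{\bstar_k}^2=\norm{\Bstar}_F^2=\norm{\Xstar}_F^2=\sum_{i=1}^r(\sigma_i^\ast)^2\le r\,\sigmax^2$, we get $\norm{\B-\G}_F\le 0.4\delta_t\sqrt r\,\sigmax$ and $\norm{\X-\Xstar}_F\le 1.4\delta_t\sqrt r\,\sigmax$. For item~4 I would apply Weyl's inequality for singular values, $|\sigma_i(\B)-\sigma_i(\G)|\le\norm{\B-\G}\le\norm{\B-\G}_F\le 0.4\delta_t\sqrt r\,\sigmax$. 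Here $\G=(\U^\top\Ustar)\Bstar$; since $\Bstar\Bstar^\top=(\Sigmastar)^2$ we have $\sigma_{\min}(\Bstar)=\sigmin$ and $\sigma_{\max}(\Bstar)=\sigmax$, and since $\sigma_{\min}(\U^\top\Ustar)=\sqrt{1-\SE_2(\U,\Ustar)^2}\ge\sqrt{1-\delta_t^2}$ we obtain $\sigma_{\min}(\G)\ge\sqrt{1-\delta_t^2}\,\sigmin$ and $\sigma_{\max}(\G)\le\sigmax$. Thus $\sigmamin(\B)\ge\sqrt{1-\delta_t^2}\,\sigmin-0.4\delta_t\sqrt r\,\sigmax$ and $\sigma_{\max}(\B)\le\sigmax+0.4\delta_t\sqrt r\,\sigmax$. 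The crux — and the only place the bound on $\delta_t$ enters — is that $\sqrt r\,\sigmax=\sqrt r\,\kappa\,\sigmin$ exceeds the target scale $\sigmin$ by the factor $\sqrt r\,\kappa$; under $\delta_t\le c/\sqrt r\kappa^2$ the perturbation term is at most $(0.4c/\kappa)\sigmin\le 0.4c\,\sigmin$ while $\delta_t^2$ is negligible, so taking $c$ small (e.g.\ $c\le 0.1$) yields $\sigmamin(\B)\ge 0.8\sigmin$ and $\sigma_{\max}(\B)\le 1.1\sigmax$. I expect this last step — matching the $\sqrt r\kappa$-inflated perturbation $\norm{\B-\G}$ against the smaller scale $\sigmin$ — to be the only genuinely delicate point; everything else is a routine chain of triangle inequalities.
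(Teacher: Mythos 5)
Your proof is correct, and since the paper itself does not re-prove this lemma but merely cites [Theorem 4, lrpr\_gdmin\_2] with the remark that ``once we have a bound on $\|\b_k-\g_k\|$ ... we can prove it as done there,'' your argument is the natural self-contained version of that deferred proof. Items 1--3 are indeed routine: the decomposition $\x_k-\xstar_k=\U(\b_k-\g_k)-(\I-\U\U^\top)\xstar_k$ plus $\|(\I-\U\U^\top)\Ustar\bstar_k\|\le\delta_t\|\bstar_k\|$ yields item~2 (with $1.1$ needing only the trivial $\delta_t\le 0.25$, not the strong bound), and squaring/summing with $\|\Bstar\|_F^2=\sum_i(\sigma_i^*)^2\le r\,\sigmax^2$ gives item~3. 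Your item~4 via Weyl, $\G=(\U^\top\Ustar)\Bstar$, $\sigma_{\min}(\U^\top\Ustar)\ge\sqrt{1-\delta_t^2}$, and $\sigma_{\min}(\Bstar)=\sigmin$, $\sigma_{\max}(\Bstar)=\sigmax$, is exactly where $\delta_t\le c/(\sqrt r\,\kappa^2)$ is needed to beat down the $\sqrt r\,\kappa$-inflated perturbation $\|\B-\G\|_F\le 0.4\delta_t\sqrt r\,\kappa\,\sigmin$ to below $0.2\sigmin$, as you correctly note. No gap.
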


\subsubsection{Analyzing the GD step for updating \textbf{U}}
Observe that
\[
\E[\nabla f(\U,\B)]=m(\X-\Xstar)\B^\top=m(\U\B\B^\top-\Ustar\Bstar\B^\top)
\]
Let  $\nabla f:= \nabla f(\U,\B)$ and
\[
	\Err=\nabla f-\E[\nabla f] = \nabla f - m(\U\B\B^\top-\Ustar\Bstar\B^\top)
\]
Recall that $\U^+ := QR(\U - (\eta/m) \nabla f)$ denotes the updated $\U$ after one GD step followed by orthonormalization.
We state the algebra lemma from \cite{lrpr_gdmin_2} next.
\begin{lemma}(algebra lemma)\label{sd_pop}
	 We have
	\begin{align*}
		\SD_2(\Ustar,\U^+)
		\leq \frac{ \| \I_{r}-\eta\B \B^\top \| \SD_2(\Ustar,\U)  + \frac{\eta}{m} \| \Err\| }{ 1-\frac{\eta }{m}\|\E[\nabla f(\U,\B)] \|  -\frac{\eta }{m} \|\Err\|}
	\end{align*}
\end{lemma}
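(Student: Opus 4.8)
The plan is to treat Lemma \ref{sd_pop} as a purely deterministic linear-algebra fact about one QR-orthonormalized gradient step and simply unwind the definitions, using only submultiplicativity of the spectral norm and the triangle inequality for singular values. First I would write $\Uhat^+ := \U - (\eta/m)\nabla f$ for the iterate before orthonormalization and its QR factorization $\Uhat^+ \qreq \U^+ \bm{R}^+$, so that $\U^+ = \Uhat^+(\bm{R}^+)^{-1}$. Since $\U^+$ has orthonormal columns, $\Uhat^{+\top}\Uhat^+ = \bm{R}^{+\top}\bm{R}^+$, hence $\sigma_i(\Uhat^+) = \sigma_i(\bm{R}^+)$ for every $i$ and $\|(\bm{R}^+)^{-1}\| = 1/\sigma_{\min}(\bm{R}^+) = 1/\sigma_{\min}(\Uhat^+)$. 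Using the definition $\SD_2(\Ustar,\U^+) = \|(\I - \Ustar\Ustar{}^\top)\U^+\|$ and submultiplicativity,
\[
\SD_2(\Ustar,\U^+) \;\le\; \frac{\|(\I-\Ustar\Ustar{}^\top)\Uhat^+\|}{\sigma_{\min}(\Uhat^+)},
\]
so the proof reduces to an upper bound on the numerator and a lower bound on the denominator.

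For the numerator I would substitute $\nabla f = \E[\nabla f] + \Err$ and $\E[\nabla f] = m(\U\B\B^\top - \Ustar\Bstar\B^\top)$ into $\Uhat^+$ to get the key identity
\[
\Uhat^+ \;=\; \U(\I_r - \eta\B\B^\top) \;+\; \eta\,\Ustar\Bstar\B^\top \;-\; \tfrac{\eta}{m}\,\Err .
\]
Left-multiplying by $\I - \Ustar\Ustar{}^\top$ annihilates the middle term because $(\I - \Ustar\Ustar{}^\top)\Ustar = 0$; taking norms, using submultiplicativity, $\|\I - \Ustar\Ustar{}^\top\| \le 1$, and $\|(\I - \Ustar\Ustar{}^\top)\U\| = \SD_2(\Ustar,\U)$ then gives $\|(\I-\Ustar\Ustar{}^\top)\Uhat^+\| \le \|\I_r - \eta\B\B^\top\|\,\SD_2(\Ustar,\U) + \tfrac{\eta}{m}\|\Err\|$. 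For the denominator I would invoke the triangle inequality for the smallest singular value: since $\U$ has orthonormal columns, $\sigma_{\min}(\U) = 1$, so $\sigma_{\min}(\Uhat^+) \ge \sigma_{\min}(\U) - \tfrac{\eta}{m}\|\nabla f\| = 1 - \tfrac{\eta}{m}\|\nabla f\| \ge 1 - \tfrac{\eta}{m}\|\E[\nabla f]\| - \tfrac{\eta}{m}\|\Err\|$, the last step using $\|\nabla f\| \le \|\E[\nabla f]\| + \|\Err\|$. Dividing the numerator bound by the denominator bound produces exactly the asserted inequality.

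I do not anticipate any genuine obstacle; the argument is a few lines of norm manipulation. The only two points worth stating carefully are (i) that the QR factor $\bm{R}^+$ is invertible, so that writing $\U^+ = \Uhat^+(\bm{R}^+)^{-1}$ is legitimate --- this holds whenever $\Uhat^+$ has full column rank, which is guaranteed as soon as the denominator $1 - \tfrac{\eta}{m}\|\E[\nabla f]\| - \tfrac{\eta}{m}\|\Err\|$ is strictly positive (i.e. $\eta$ is small enough, which is the regime in which the bound is meant to be applied), and (ii) that singular values are preserved by multiplication with a matrix having orthonormal columns, which is what underlies $\sigma_{\min}(\Uhat^+) = \sigma_{\min}(\bm{R}^+)$. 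Everything else is the triangle inequality and submultiplicativity of $\|\cdot\|$.
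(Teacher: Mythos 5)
Your argument is correct and is essentially the standard proof of this algebra lemma (which the paper takes as a black box from \cite{lrpr_gdmin_2}): write $\Uhat^+=\U(\I_r-\eta\B\B^\top)+\eta\Ustar\Bstar\B^\top-\frac{\eta}{m}\Err$, kill the $\Ustar$-range term by projecting with $\I-\Ustar\Ustar{}^\top$, and lower-bound $\sigma_{\min}(\Uhat^+)=\sigma_{\min}(\bm{R}^+)$ via Weyl's inequality using $\sigma_{\min}(\U)=1$. The two side remarks you flag (invertibility of $\bm{R}^+$ when the denominator is positive, and singular-value preservation under a matrix with orthonormal columns) are exactly the right technical points to note, and no step is missing.
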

Below we bound $\Err$ term
\begin{small}
	\begin{align*}
		\Err: =	&\nabla f(\U,\B)-\E[\nabla f(\U,\B)] \\ %&=\sum_k \A_k^\top(\A_k\x_k-\y_k)\b_k^\top - \sum_k m (\x_k-{\xstar_k})\b_k{}^\top \\
		%& = \sum_k \A_k^\top(\A_k\x_k-\A_k\xstar_k - \n_k)\b_k^\top - \sum_k m (\x_k-{\xstar_k})\b_k{}^\top \\
		%& = \sum_k \A_k^\top(\A_k\x_k-\A_k\xstar_k)\b_k^\top - \sum_k m (\x_k-{\xstar_k})\b_k{}^\top - \sum_k \A_k^\top\n_k\b_k^\top\\
		& = \sum_k \A_k^\top(\A_k\x_k-\A_k\xstar_k)\b_k^\top - \E[\nabla f]  - \sum_k \A_k^\top\n_k\b_k^\top\\
		& := \mathrm{Term}_{\mathrm{nonoise}} - \mathrm{Term}_{\mathrm{noise}}
	\end{align*}
\end{small}
where $\mathrm{Term}_{\mathrm{nonoise}} = \sum_k \A_k^\top(\A_k\x_k-\A_k\xstar_k)\b_k^\top - \E[\nabla f]$, and $\mathrm{Term}_{\mathrm{noise}}=\sum_k \A_k^\top\n_k\b_k^\top$. 
$\mathrm{Term}_{\mathrm{nonoise}}$ is bounded in \cite[Lemma 3.5 item 1]{lrpr_gdmin}, and we can bound $\mathrm{Term}_{\mathrm{noise}}$ using sub-exponential Bernstein inequality followed by an epsilon-net argument and can conclude the following lemma.

\begin{lemma} \label{err_bound_lemma}
	Assume $\SD_2(\Ustar,\U)\leq \delta_t<\delta_0$. Then,
	w.p. at least $1-  \exp ( C(n+r) -c\frac{\eps_1^2 m q }{\kappa^4 \mu^2 r} )-\exp (C(n+r)- c \frac{\eps_2^2 m q\sigmin^2 }{ \sigma_{\n}^2 q\kappa^2 } )- 2\exp(\log q + r - c m) - \exp (\log q + r-\frac{m\eps^2\mu^2r\sigmax{}^2}{Cq\sigma_{\n}^2} )$ 
	\[
	\norm{\nabla f(\U,\B)-\E[\nabla f(\U,\B)]}\leq \eps_1 \delta_t m \sigmin^2+ \eps_2 m  \sigmin^2
	\]
Also, using Lemma \ref{Blemma_new1},
$\|\E[\nabla f(\U,\B)]\| \le m \|\X- \Xstar\|_F \|\B\| \le Cm\sqrt{r}\delta_t\sigmax{}^2 $.
\end{lemma}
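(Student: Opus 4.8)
The plan is to bound $\Err=\mathrm{Term}_{\mathrm{nonoise}}-\mathrm{Term}_{\mathrm{noise}}$ by controlling the two pieces separately and adding them, doing everything on the event where the conclusions of Lemma~\ref{Blemma_new1} (equivalently of Lemma~\ref{Blemma_new2}) hold, so that $\sigma_{\max}(\B)\le 1.1\sigmax$, $\max_k\|\b_k\|\le 1.1\mu\sqrt{r/q}\sigmax$, and $\|\X-\Xstar\|_F\le 1.4\sqrt r\delta_t\sigmax$ are all available. The failure probability of that event is the source of the $2\exp(\log q+r-cm)$ and $\exp(\log q+r-m\eps^2\mu^2 r\sigmax^2/(Cq\sigma_v^2))$ terms (these come from the extreme-singular-value bounds for the $m\times r$ Gaussian matrices $\A_k\U$ and from the per-column noise quantity $\|\U^\top\A_k^\top\n_k\|$ used inside Lemma~\ref{Blemma_new2}). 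On this event $\mathrm{Term}_{\mathrm{nonoise}}=\sum_k\A_k^\top(\A_k\x_k-\A_k\xstar_k)\b_k^\top-\E[\nabla f]$ is exactly the quantity controlled in \cite[Lemma 3.5, item 1]{lrpr_gdmin}, so I would simply invoke that result to get $\|\mathrm{Term}_{\mathrm{nonoise}}\|\le\eps_1\delta_t m\sigmin^2$ with failure probability $\exp(C(n+r)-c\eps_1^2 mq/(\kappa^4\mu^2 r))$; its proof is a sub-exponential Bernstein bound plus a $1/4$-net over the product of the unit spheres in $\Re^n$ and $\Re^r$, which has at most $\exp(C(n+r))$ points.

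The real work is $\mathrm{Term}_{\mathrm{noise}}=\sum_k\A_k^\top\n_k\b_k^\top$. The structural fact I would exploit is that, because of sample-splitting, $\B$ is computed from one measurement set while the gradient uses a different one, so $\B$ is independent of the $\{\A_k,\n_k\}_k$ appearing in $\mathrm{Term}_{\mathrm{noise}}$; hence I can condition on $\B$ and treat the $\b_k$ as fixed vectors obeying the Lemma~\ref{Blemma_new1} bounds. Then I would pass to the operator norm by a $1/4$-net: with nets $\mathcal{N}_n\subset\mathbb{S}^{n-1}$, $\mathcal{N}_r\subset\mathbb{S}^{r-1}$ of total cardinality $\exp(C(n+r))$, $\|\mathrm{Term}_{\mathrm{noise}}\|\le 2\max_{\w\in\mathcal{N}_n,\z\in\mathcal{N}_r}\w^\top\mathrm{Term}_{\mathrm{noise}}\z$, and for fixed unit $\w,\z$,
\[
\w^\top\mathrm{Term}_{\mathrm{noise}}\z \;=\; \sum_{k\in[q]}(\b_k^\top\z)\,(\A_k\w)^\top\n_k \;=\; \sum_{k\in[q]}\sum_{j=1}^{m}(\b_k^\top\z)\,(\A_k\w)_j(\n_k)_j,
\]
which, conditioned on $\B$, is a weighted sum of $mq$ independent zero-mean sub-exponential variables: each $(\A_k\w)_j(\n_k)_j$ is a product of an $N(0,1)$ and an independent $N(0,\sigma_v^2)$, so its $\psi_1$-norm is at most $C\sigma_v$.

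Now I would apply the sub-exponential Bernstein inequality at deviation level $\tau=\eps_2 m\sigmin^2$: the coefficient vector has $\ell_2^2$-mass $\sum_{k,j}(\b_k^\top\z)^2=m\|\B^\top\z\|^2\le m\sigma_{\max}(\B)^2\le 1.21\,m\sigmax^2$ and $\ell_\infty$-mass $\max_k|\b_k^\top\z|\le 1.1\mu\sqrt{r/q}\sigmax$, so the Gaussian branch of the exponent is $\gtrsim \eps_2^2 m\sigmin^4/(\sigma_v^2\sigmax^2)=\eps_2^2 mq\sigmin^2/(\sigma_v^2 q\kappa^2)$ (using $\sigmax^2=\kappa^2\sigmin^2$). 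A union bound over the $\exp(C(n+r))$ net pairs multiplies the failure probability by $\exp(C(n+r))$, giving precisely the $\exp(C(n+r)-c\eps_2^2 mq\sigmin^2/(\sigma_v^2 q\kappa^2))$ term; the hypotheses of Lemma~\ref{iters_thm} (the $mq$ bound with its $\mathrm{NSR}/\eps^2$ piece, together with $m\gtrsim\max(\log q,r)\max(1,\mathrm{NSR}/(\eps^2 r))$) are exactly what is needed to make this exponent — and also the linear branch $\tau/(\sigma_v\max_k|\b_k^\top\z|)$ — dominate the net entropy $C(n+r)$, so that $\|\mathrm{Term}_{\mathrm{noise}}\|\le\eps_2 m\sigmin^2$ off a set of the stated small probability. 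Adding the two pieces gives $\|\Err\|\le\eps_1\delta_t m\sigmin^2+\eps_2 m\sigmin^2$. For the last claim I would just use $\E[\nabla f(\U,\B)]=m(\X-\Xstar)\B^\top$ (recorded above) with submultiplicativity, $\|\cdot\|\le\|\cdot\|_F$, and Lemma~\ref{Blemma_new1}: $\|\E[\nabla f]\|\le m\|\X-\Xstar\|_F\,\sigma_{\max}(\B)\le m(1.4\sqrt r\delta_t\sigmax)(1.1\sigmax)\le Cm\sqrt r\delta_t\sigmax^2$.

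The step I expect to be the main obstacle is making the noise-term estimate uniform over the $\Re^n\times\Re^r$ net. This requires handling three things carefully: (i) using the sample-split independence so that, after conditioning on $\B$, the inner sum is a genuine sum of independent terms to which Bernstein applies; (ii) getting the right sub-exponential parameters, i.e., expressing the variance-proxy and the $\ell_\infty$-scale through the operator-norm and incoherence bounds on $\B$ from Lemma~\ref{Blemma_new1}; and (iii) verifying that with the assumed $mq$ (and $m$) the Bernstein exponent beats the net cardinality $\exp(C(n+r))$ uniformly — this last check is what pins down the precise $\mathrm{NSR}/\eps^2$ form of the sample complexity. Everything else — the triangle inequality, the $\E[\nabla f]$ bound, and collecting the four failure probabilities (including those inherited from Lemma~\ref{Blemma_new1}) — is routine.
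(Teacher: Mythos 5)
Your proposal is correct and essentially reproduces the paper's own argument: same split of $\Err$ into $\mathrm{Term}_{\mathrm{nonoise}}$ and $\mathrm{Term}_{\mathrm{noise}}$, the same invocation of the earlier gradient-concentration lemma for the noise-free piece, and the same sub-exponential Bernstein plus $\eps$-net bound for $\sum_k\A_k^\top\n_k\b_k^\top$ with $\sum_k(\b_k^\top\z)^2\le\|\B\|^2\le C\sigmax^2$ and $\max_k\|\b_k\|\le 1.1\mu\sqrt{r/q}\,\sigmax$ supplied by Lemma~\ref{Blemma_new1}, followed by the submultiplicativity bound on $\E[\nabla f]$. You spell out the sample-split conditioning step and keep the $\w$- and $\z$-nets dimensionally consistent somewhat more carefully than the paper's appendix does, but the route and the estimates are the same.
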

 We provide the proof in Appendix \ref{gd_thms}. %of ArXiv version of this work \cite{noisy_lrcs}.

\subsubsection{Proof of Theorem \ref{iters_thm}}\label{gd_proof_sec} 
Using the bounds from Lemma \ref{err_bound_lemma}, \ref{Blemma_new1} and setting $\eps_1 =0.1$, and $\eps_2=\frac{\eps c}{4\sqrt{r}\kappa^2}$, ($\eps$ is the final desired error) we conclude the following: if in each iteration, $m q\geq\left( C_1 \kappa^4\mu^2(n+r)r + C_2\frac{(n+r)r}{\eps^2}\mathrm{NSR}\kappa^6\right)$, $m \gtrsim \max\left(\log q,r, \max(\log q , r) \frac{\mathrm{NSR}}{\eps^2\mu^2r\kappa^2}\right)$, then, w.p. $1-n^{-10}$ 
\begin{align}
	&\left(1-\frac{\eta }{m}\|\E[\nabla f(\U,\B)] \|  -\frac{\eta }{m} \|\Err\|\right)^{-1} \\ &\leq \left(1 - \eta C\sqrt{r}\delta_t\sigmax{}^2-\eta\sigmin{}^2(\eps_1\delta_t+\eps_2)\right)^{-1} \notag \\
	& \leq (1 + 0.6 \eta \sigmin{}^2) \label{eq:denom}
\end{align}
We used $\delta_t<\delta_0=\frac{c}{C\sqrt{r}\kappa^2}$, $c<0.1$, $\eps_2=\frac{\eps c}{4\sqrt{r}\kappa^2}<0.1$ and  $(1-x)^{-1}< (1+2x)$ if $|x|<1$.
 	
Using Lemma \ref{Blemma_new1}, we get
\[
\lambda_{\min}(\I_{r} -  \eta  \B \B^\top) = 1 - \eta  \|\B\|^2 \ge 1 - 1.2 \eta  \sigmax^2
\]
Thus, if $\eta \le 0.5/  \sigmax^2 $, then the above matrix is p.s.d. This along with Lemma \ref{Blemma_new1} implies that
\[
\|\I_{r} -  \eta  \B \B^\top\| = \lambda_{\max}(\I -  \eta  \B \B^\top) \le 1 - 0.8 \eta   \sigmin^2
\]
Using the above and Lemmas \ref{err_bound_lemma} and \ref{Blemma_new1},
\begin{align}
	&\| \I_{r}-\eta\B \B^\top \| \SD_2(\Ustar,\U)  + \frac{\eta}{m} \| \Err\| \\&\leq (1 - 0.8 \eta \sigmin{}^2 ) \delta_t + 0.1 \eta \sigmin{}^2 \delta_t+ \eta \eps_2\sigmin{}^2 \notag \\
	& =  (1 - 0.7 \eta \sigmin{}^2 ) \delta_t +  \eta \sigmin{}^2  \eps_2 \label{eq:num}
\end{align} 	
Combining \eqref{eq:num}, \eqref{eq:denom} we get
\begin{small}
	\begin{align}
		&\SD_2(\Ustar,\U^+)\\
		&\leq (1 - 0.7 \eta \sigmin{}^2 )(1 + 0.6 \eta \sigmin{}^2) \delta_t + \eta \sigmin{}^2\eps_2(1 + 0.6 \eta \sigmin{}^2) \notag \\
		& \leq (1 - 0.1 \eta \sigmin{}^2 ) \delta_t + 2 \eta \sigmin{}^2\eps_2
	\end{align}	
\end{small}
Denote $  a=(1 - 0.1 \eta \sigmin{}^2 )$, $B = 2 \eta \sigmin{}^2\eps_2$.
Now using induction if $\delta_{1} \leq a\delta_0 + B$, and $\delta_{t}\leq a\delta_{t-1} + B$ implies $\delta_t \leq a^t \delta_0 + B\sum_{\tau=0}^{t-1}a^\tau$	
Since $a<1$ implies using sum of geometric series we get $\delta_{T}\leq a^T \delta_0 + B \frac{1}{1-a}$. %\[\delta_{T}\leq a^T \delta_0 + B \frac{1-a^T}{1-a}\]
That is
\begin{small}
	\begin{align*}
		&\SD_2(\Ustar, \U_{T})  \le  \left(1 - 0.1 \eta \sigmin{}^2 \right)^{T}\delta_0 + 2 \eta \sigmin{}^2\eps_2\frac{1}{1-a} \\
		%\SD_2(\Ustar, \U_{T})  &\le  \left(1 - 0.5 \eta \sigmin{}^2 \right)^{T}\delta_0 + 2\eps_2 \\
		&\le  \left(1 - 0.1 \eta \sigmin{}^2 \right)^{T}\delta_0 + \frac{1}{0.1} 2  \eps_2 
		\le  \left(1 - 0.1 \eta \sigmin{}^2 \right)^{T}\delta_0 + C\frac{\eps }{2\sqrt{r}\kappa^2}\\
		& \le  \left(1 - 0.1 \eta \sigmin{}^2 \right)^{T}\delta_0 + \eps
	\end{align*}
\end{small}	
w.p. at least $1-  \exp ( C(n+r) -c\frac{ m q }{\kappa^4 \mu^2 r} )-\exp (C(n+r)- c \frac{ \eps^2m q\sigmin^2 }{ r\sigma_{\n}^2 q\kappa^6 } )- 2\exp(\log q + r - c m) - \exp (\log q + r-\frac{m\eps^2\mu^2r\sigmax{}^2}{Cq\sigma_{\n}^2} )$.  	

\subsection{Proof outline for Initialization Theorem \ref{init_thm}}
 We need a new result to show $\E[\Xhat_{0}|\alpha]= \Xstar \D(\alpha) $ where $\D$ is a diagonal matrix defined in Lemma \ref{X_0} given below.  
\begin{lemma} \label{X_0}
	Conditioned on $\alpha$, we have the following conclusions.
	 Let $\bzeta$  be a scalar standard Gaussian r.v.. Define
	\begin{small}
	\[
	\w_k(\alpha)=\E\left[\bzeta^2 \indic_{ \left\{ |\bzeta| \leq \frac{\sqrt{\alpha}}{\sqrt{\|\xstar_{k}\|^2+\sigma_{\n}^2}} \right\} } \right].
	\]
\end{small}
	Then,
	\begin{align}
		&\E[\Xhat_0|\alpha] = \Xstar \D(\alpha), \nonumber\\
		&\text{ where }  \D(\alpha):=diagonal(\w_k(\alpha),k \in [q])
		\label{X0}
	\end{align}
	i.e. $\D(\alpha)$ is a diagonal matrix of size $q\times q$ with diagonal entries $\w_k(\alpha)$ defined above.
\end{lemma}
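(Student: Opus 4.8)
The plan is to compute $\E[\Xhat_0\mid\alpha]$ column by column and reduce each column to a one‑dimensional Gaussian moment. By linearity, the $k$‑th column of $\Xhat_0$ is $\frac1m\sum_{i=1}^m \a_{ki}(\y_k)_i\,\indic_{\{|(\y_k)_i|\le\sqrt\alpha\}}$, where $\a_{ki}^\top$ is the $i$‑th row of $\A_k$ (so $\a_{ki}\sim\mathcal{N}(\bm{0},\I_n)$), $(\n_k)_i\sim\mathcal{N}(0,\sigma_{\n}^2)$ independent, and $(\y_k)_i=\a_{ki}^\top\xstar_k+(\n_k)_i$. The $m$ summands are i.i.d., so the $k$‑th column has expectation $\E\big[\a\,\y\,\indic_{\{|\y|\le\sqrt\alpha\}}\big]$, where $\a\sim\mathcal{N}(\bm{0},\I_n)$, $\nu\sim\mathcal{N}(0,\sigma_{\n}^2)$ is independent, and $\y:=\a^\top\xstar_k+\nu$. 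Throughout I treat $\alpha$ as a fixed deterministic threshold; that is the meaning of ``conditioned on $\alpha$'' here. The fact that the algorithm's $\alpha$ is itself a function of the data is not addressed by this lemma but is handled downstream in the proof of Theorem \ref{init_thm}, by a uniform argument over an interval of admissible thresholds (since $\alpha$ concentrates) together with continuity of $\D(\cdot)$; I expect this bookkeeping to be the only genuinely delicate point, the identity itself being an exact Gaussian calculation.

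Next I use rotational invariance to eliminate the component of $\a$ orthogonal to $\xstar_k$. Assuming $\xstar_k\neq\bm{0}$ (otherwise both sides vanish), set $\urow:=\xstar_k/\|\xstar_k\|$ and write $\a=g\,\urow+\a_\perp$ with $g:=\a^\top\urow\sim\mathcal{N}(0,1)$ and $\a_\perp:=(\I-\urow\urow^\top)\a$. Then $\y=\|\xstar_k\|\,g+\nu$ is a function of $(g,\nu)$ only, while $\a_\perp$ is independent of $(g,\nu)$ with $\E[\a_\perp]=\bm{0}$; hence
\[
\E\big[\a\,\y\,\indic_{\{|\y|\le\sqrt\alpha\}}\big]=\urow\,\E\big[g\,\y\,\indic_{\{|\y|\le\sqrt\alpha\}}\big]+\E[\a_\perp]\,\E\big[\y\,\indic_{\{|\y|\le\sqrt\alpha\}}\big]=\urow\,\E\big[g\,\y\,\indic_{\{|\y|\le\sqrt\alpha\}}\big].
\]
So a single scalar expectation remains, and the answer is already forced to lie along $\xstar_k$.

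Finally I evaluate $\E[g\,\y\,\indic_{\{|\y|\le\sqrt\alpha\}}]$ using that $(g,\y)$ is jointly Gaussian with $\mathrm{Var}(g)=1$, $\mathrm{Var}(\y)=\|\xstar_k\|^2+\sigma_{\n}^2=:s_k^2$, and $\mathrm{Cov}(g,\y)=\|\xstar_k\|$. Writing the linear regression $g=\tfrac{\|\xstar_k\|}{s_k^2}\,\y+\xi$ with $\xi\perp\y$ and $\E[\xi]=0$, the $\xi$‑term drops:
\[
\E\big[g\,\y\,\indic_{\{|\y|\le\sqrt\alpha\}}\big]=\frac{\|\xstar_k\|}{s_k^2}\,\E\big[\y^2\,\indic_{\{|\y|\le\sqrt\alpha\}}\big].
\]
Substituting $\y=s_k\,\bzeta$ with $\bzeta\sim\mathcal{N}(0,1)$ gives $\{|\y|\le\sqrt\alpha\}=\{|\bzeta|\le\sqrt\alpha/s_k\}$ and $\E[\y^2\,\indic_{\{|\y|\le\sqrt\alpha\}}]=s_k^2\,\E[\bzeta^2\,\indic_{\{|\bzeta|\le\sqrt\alpha/s_k\}}]=s_k^2\,\w_k(\alpha)$, which is precisely the quantity in the statement (note $s_k=\sqrt{\|\xstar_k\|^2+\sigma_{\n}^2}$). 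Chaining these displays yields $\E[g\,\y\,\indic_{\{|\y|\le\sqrt\alpha\}}]=\|\xstar_k\|\,\w_k(\alpha)$, so the $k$‑th column of $\E[\Xhat_0\mid\alpha]$ equals $\tfrac1m\cdot m\cdot\urow\,\|\xstar_k\|\,\w_k(\alpha)=\w_k(\alpha)\,\xstar_k$. Collecting columns, $\E[\Xhat_0\mid\alpha]=\Xstar\,\mathrm{diag}(\w_1(\alpha),\dots,\w_q(\alpha))=\Xstar\D(\alpha)$, as claimed; everything except the treatment of the data dependence of $\alpha$ is a routine two‑variable Gaussian computation.
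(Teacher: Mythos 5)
Your proof is correct and is essentially the same argument as the paper's: the decomposition $\a = g\,\urow + \a_\perp$ is exactly the paper's rotation by $\Q_k$ (with $g = \tilde\a_\ik(1)$), and your linear-regression decomposition $g = \tfrac{\|\xstar_k\|}{s_k^2}\y + \xi$ is the paper's use of the conditional-expectation formula for jointly Gaussian variables; both then rescale $\y$ to a standard Gaussian $\bzeta$ to recover $\w_k(\alpha)$.
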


\begin{proof}
It suffices to show that $\E[(\Xhat_0)_k |\alpha] = \xstar_k \w_k(\alpha)$ for each $k$. Using the same argument given in proof of \cite[Lemma 3.6 item 1]{lrpr_gdmin} we can write
\begin{align*}
	(\Xhat_0)_k = 
\frac{1}{m} \sum_i \Q_k  \tilde\a_\ik \s \indic_{ | \s| \le \sqrt{\alpha} }
\end{align*} where $\Q_k$ is an $n \times n$ unitary matrix with first column $\xstar_k/\|\xstar_k\|$, $\tilde\a_\ik:=\Q_k^\top \a_\ik$ has the same distribution as $\a_\ik$, both are $\mathcal{N}(0,\I_n)$, $\s=\|\xstar_k\|\tilde\a_\ik (1) + \n_{\ik}$.
Thus \[
\E[(\Xhat_0)_k] = \Q_k\e_1 \E[\tilde\a_\ik(1)\s\indic_{|\s| < \sqrt{\alpha}}].
\]

This follows because $\{\tilde\a_\ik(i)\}_{i=2}^{m}$ are independent of $\s$. 
Removing the subscripts, the problem reduces to finding the expectation $\E[\a(1)\s\indic_{ |\s| < \sqrt{\alpha} }]$ where $\s=\a(1)\|\xstar_k\| +  \n$, $\a(1), \n$ are independent normal random variables, $\|\xstar_k\|, \alpha$ are constants.
Using law of total expectation
\begin{align*}
	\E[\a(1)\s\indic_{ |\s| < \sqrt{\alpha} }] &= \E[\E[\a(1)\s\indic_{ |\s| < \sqrt{\alpha} }|\s = s]]\\
	%&\text{Law of total expectation}\\
	&=\E[\s\indic_{ |\s| < \sqrt{\alpha} }\E[\a(1)|\s = s]]
\end{align*}

We solve the inner expectation first as follows. Since $\a(1), \s$ are joint Gaussian, thus, using the formula for conditional expectation of joint Gaussians\footnote{If $X,Y$ joint Gaussian, then
$\E[X|Y] = \E[X]+ cov(X,Y) \frac{1}{var(Y)} (Y - \E[Y])$. For our setting $X= \zeta$, $Y = \s = \zeta\|\xstar_k\| + \bm{v}$,  $\E[X]=0$, $\E[Y]=0$, $cov(X,Y) = E[\zeta^2]\|\xstar_k\| = \|\xstar_k\|$ and $var(Y) = \|\xstar_k\|^2 + \sigma_v^2$
}, we can obtain an exact expression for $\E[\a(1)|\s = s]$:
\[
\E[\a(1)|\s = s] =   \frac{s\|\xstar_k\|}{\|\xstar_k\|^2+\sigma_{\n}^2} 
\]
Now taking the outer expectation
\begin{small}
	\begin{align*}
		&\E\left[\frac{\s^2\|\xstar_k\|}{\|\xstar_k\|^2+\sigma_{\n}^2}\indic_{ |\s| < \sqrt{\alpha} }\right] \\ &=\|\xstar_k\|\E\left[\left(\frac{\s}{\sqrt{\|\xstar_k\|^2+\sigma_{\n}^2}}\right)^2\indic_{ \left(\frac{\s}{\sqrt{\|\xstar_k\|^2+\sigma_{\n}^2}}\right)^2 < \frac{\alpha}{\|\xstar_k\|^2+\sigma_{\n}^2}}\right]
	\end{align*}
\end{small}

Note $\s=\a(1)\|\xstar_k\| + \n \sim\mathcal{N}(0,\|\xstar_k\|^2+\sigma_{\n}^2)$ implies $\bzeta=\frac{\s}{\sqrt{\|\xstar_k\|^2+\sigma_{\n}^2}}\sim\mathcal{N}(0,1)$. Implies $\E[(\Xhat_0)_k] = \Q_k\e_1 \|\xstar_k\|\w_k(\alpha)=\xstar_k\w_k(\alpha)$. 
\end{proof}

\subsubsection{Proof of Theorem \ref{init_thm}}\label{init_proof_sec}
The proof technique is similar to that of \cite[Lemma 3.6]{lrpr_gdmin} where we apply Wedin's $\sin \Theta$ theorem on $\Xhat_0$ and $\E[\Xhat_0|\alpha]$ to bound $\SD_2(\U_0, \Ustar)$. 
 Then using sub-Gaussian Hoeffding inequality followed by an epsilon-net argument, we bound the terms in $\SD_2(\U_0,\Ustar)$. The bounds on these terms are a sum of two terms, the first is the same as that in \cite[Lemma 3.8]{lrpr_gdmin} and the second is $1.1\eps_1\sqrt{q}\sigma_{\n}$.

We can conclude, for a $\delta_0 < 0.1$, $\SD_2(\U_0,\Ustar) < \delta_0$ w.p. at least $1 - 5 \exp(-c(n+q))$ if $mq > C \kappa^2 \mu^2 (n+q)(r\sigmax{}^2 + q\sigma_{\n}^2) / \delta_0^2\sigmin{}^2$. The proof  is provided in Appendix \ref{init_thms}.
%We can conclude, for a $\delta_0 < 0.1$, $\SD_2(\U_0,\Ustar) < \delta_0$ w.p. at least
%\\ $  1 - 2\exp( (n+q)- c mq \delta_0^2\sigmin{}^2 /\mu^2\kappa^2(r\sigmax{}^2 + q\sigma_{\n}^2)  ) - 2\exp( n r - c mq \delta_0^2\sigmin{}^2 / \mu^2\kappa^2(r\sigmax{}^2 + q\sigma_{\n}^2) ) - 2\exp( q r - c mq \delta_0^2\sigmin{}^2 / \mu^2\kappa^2(r\sigmax{}^2 + q\sigma_{\n}^2) ) - \exp(- c mq \delta_0^2\sigmin{}^2  /\mu^2\kappa^2(r\sigmax{}^2 + q\sigma_{\n}^2) ) $.
%
%This is $\ge 1 - 5 \exp(-c(n+q))$ if $mq > C \kappa^2 \mu^2 (n+q)(r\sigmax{}^2 + q\sigma_{\n}^2) / \delta_0^2\sigmin{}^2$.
%
%The proof  is provided in Appendix \ref{init_thms}. % of the arXiv version of this document \cite{noisy_lrcs}.

		\clearpage
\bibliographystyle{IEEEtran}
	\bibliography{byz}

	\clearpage
	\appendices \renewcommand\thetheorem{\Alph{section}.\arabic{theorem}}

\section{Proof of Theorem \ref{gdmin_thm}}\label{main_thm}

\begin{proof}
	The $\SD_2(.)$ bound is an immediate consequence of Theorems \ref{init_thm} and \ref{iters_thm}. To apply Theorem \ref{iters_thm}, we need $\delta_0 = c / \sqrt{r}\kappa^2$. By Theorem \ref{init_thm}, if $m_0 q > C \left(\mu^2\kappa^8nr^2 + \mu^2\kappa^6nr\mathrm{NSR}\right)$, then, w.p. at least $1-n^{-10}$, $\SD_2(\Ustar, \U_0) \le \delta_0 = c / \sqrt{r}\kappa^2$.
	With this,  if, at each iteration, $m_t q\geq C\left(\kappa^4\mu^2nr +\mu^2\kappa^6 \frac{nr}{\eps^2}\mathrm{NSR}\right) \text{ and } m_t \gtrsim  \max\left(\log q,r, \frac{(\log q + r)}{\eps^2\mu^2r\kappa^2}\mathrm{NSR}\right)$, then by Theorem \ref{iters_thm}, w.p. at least $1-(t+1) n^{-10}$,
	\[\SD_2(\Ustar, \U_{t+1}) \leq \left(1 - \frac{c_{\eta}}{\kappa^2}\right)^{t}\frac{c}{\sqrt{r}\kappa^2} + \eps/2\] where $c_{\eta}=0.5c$.
	To guarantee $\SD_2(\U_T, \Ustar) \le \eps$, we need $T \ge C \frac{\kappa^2}{c_\eta} \log(1 / \eps ).$
	This follows by using $\log(1-x) < - x$ for $|x|<1$ and using $\kappa^2 \sqrt{r} \ge 1$.
	Thus, setting $c_\eta = 0.4$, our sample complexity $m = m_0 + m_1 T$ becomes
	$
	mq \ge C \mu^2\kappa^6nr\left(\kappa^2r + \mathrm{NSR} + \log\left(\frac{1}{\eps}\right)+\kappa^2\frac{\mathrm{NSR}}{\eps^2}\log\left(\frac{1}{\eps}\right)\right),
	$
	and $ m \ge C \max\left(\log q,r, \frac{(\log q + r)}{\eps^2\mu^2r\kappa^2}\mathrm{NSR}\right) \log (1/\eps)$.
	
	Thus by setting $T =  C \kappa^2 \log(1/\epsilon)$ in this, we can guarantee $\SD_2(\U_T, \Ustar)\leq \eps/2 + \eps/2 =\eps $. 
	This proves the $\SD_2(\U_T, \Ustar)$ bound. 
\end{proof}

\section{Proof of the GD result}\label{gd_thms}

\begin{lemma}[Bound on $\|\g_k - \b_{k}\|$]
	Assume that $\SD_2(\U,\Ustar) \le \delta_t\leq \delta_0 = \frac{c}{\sqrt{r}\kappa^2} $.
	Let $\g_k:= \U^\top \xstar_k = (\U^\top \Ustar) \bstar_k$.
	
	If $m\gtrsim \max\left(\log q,r, \frac{(\log q + r)}{\eps^2\mu^2r\kappa^2}\mathrm{NSR}\right)$, 
	then, w.p. at least $1-n^{-10}$, %whp\textsuperscript{\ref{note1}}.
	\[\|\g_k - \b_{k}\|   \leq 0.4\delta_t\mu\sqrt{\frac{r}{q}}\sigmax\]
\end{lemma}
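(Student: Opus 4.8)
The plan is to control the three factors on the right-hand side of \eqref{gk_bhatk_bnd_1} separately, and to take a union bound over $k\in[q]$ only at the very end. Write $\z_k := (\I-\U\U^\top)\xstar_k = (\I-\U\U^\top)\Ustar\bstar_k$, so that \eqref{gk_bhatk_bnd_1} reads $\|\b_k-\g_k\| \le \|(\U^\top\A_k^\top\A_k\U)^{-1}\|\,\big(\|\U^\top\A_k^\top\A_k\z_k\| + \|\U^\top\A_k^\top\n_k\|\big)$. A deterministic first step gives $\|\z_k\| \le \|(\I-\U\U^\top)\Ustar\|\,\|\bstar_k\| = \SD_2(\U,\Ustar)\,\|\bstar_k\| \le \delta_t\,\mu\sqrt{r/q}\,\sigmax$, using Assumption~\ref{right_incoh}. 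For the invertibility factor, observe that since $\U^\top\U=\I_r$, the matrix $\A_k\U$ has i.i.d.\ $\mathcal{N}(0,\I_r)$ rows, so the standard non-asymptotic bound on the smallest singular value of a tall Gaussian matrix (\cite{versh_book}) gives $\sigma_{\min}(\A_k\U)\ge\sqrt{m}/2$, hence $\|(\U^\top\A_k^\top\A_k\U)^{-1}\| \le 4/m$, on an event of probability at least $1-2e^{-cm}$ once $m\gtrsim r$.

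Both cross terms are handled by the sub-exponential Bernstein inequality followed by an $\epsilon$-net over $S^{r-1}$, exactly the scheme already used for \cite[Lemma 3.3 part 1]{lrpr_gdmin}. For the noise-free cross term, fix a unit vector $\w\in\Re^r$ and write $\w^\top\U^\top\A_k^\top\A_k\z_k = \sum_{i=1}^m (\w^\top\U^\top\a_{ki})(\z_k^\top\a_{ki})$, where $\a_{ki}$ is the $i$-th row of $\A_k$. Since $\U$ and $\z_k$ span orthogonal subspaces, the factors $\w^\top\U^\top\a_{ki}\sim\mathcal{N}(0,1)$ and $\z_k^\top\a_{ki}\sim\mathcal{N}(0,\|\z_k\|^2)$ are independent, so each summand is zero-mean and sub-exponential with $\psi_1$-norm $\lesssim\|\z_k\|$, and the $m$ summands are independent. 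Bernstein at level $\eps_0 m\|\z_k\|$ for a small numerical constant $\eps_0$, plus a union bound over a net of $S^{r-1}$ of cardinality $e^{Cr}$, gives $\|\U^\top\A_k^\top\A_k\z_k\| \le 2\eps_0 m\|\z_k\|$ with probability at least $1-e^{Cr-c\eps_0^2 m}$; multiplying by the $4/m$ bound, this term contributes at most $8\eps_0\,\delta_t\,\mu\sqrt{r/q}\,\sigmax \le 0.2\,\delta_t\,\mu\sqrt{r/q}\,\sigmax$ once $\eps_0$ is chosen small enough, provided $m\gtrsim r+\log q$.

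For the noise term, fix a unit $\w$ and write $\w^\top\U^\top\A_k^\top\n_k = \sum_{i=1}^m (\w^\top\U^\top\a_{ki})\,\n_{ki}$; each summand is the product of the independent Gaussians $\w^\top\U^\top\a_{ki}\sim\mathcal{N}(0,1)$ and $\n_{ki}\sim\mathcal{N}(0,\sigma_v^2)$, hence zero-mean and sub-exponential with $\psi_1$-norm $\lesssim\sigma_v$. The key point is to apply Bernstein \emph{in its sub-Gaussian regime}: taking the level $t = C_0\,\sigma_v\sqrt{m(r+\log q)}$, which satisfies $t\lesssim\sigma_v m$ because $m\gtrsim r+\log q$, the tail is $e^{-cC_0^2(r+\log q)}$; a net over $S^{r-1}$ then yields $\|\U^\top\A_k^\top\n_k\| \le 2C_0\,\sigma_v\sqrt{m(r+\log q)}$ with probability at least $1-e^{Cr-cC_0^2(r+\log q)}$ for $C_0$ a large enough numerical constant. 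Multiplying by $4/m$, this contributes at most $8C_0\,\sigma_v\sqrt{(r+\log q)/m}$; substituting $\sigma_v^2 = \mathrm{NSR}\cdot\sigmin^2/q$ and $\sigmin^2 = \sigmax^2/\kappa^2$, the hypothesis $m\gtrsim \tfrac{(r+\log q)}{\eps^2\mu^2 r\kappa^2}\,\mathrm{NSR}$ makes this at most $0.2\,\eps\,\mu\sqrt{r/q}\,\sigmax \le 0.2\,\delta_t\,\mu\sqrt{r/q}\,\sigmax$, where we use $\delta_t\ge\eps$ (which holds at every iteration of the algorithm). Adding the two cross-term bounds gives $\|\g_k-\b_k\| \le 0.4\,\delta_t\,\mu\sqrt{r/q}\,\sigmax$ on the intersection of the three events; a final union bound over $k\in[q]$ multiplies each failure probability by $q$, and since $m\gtrsim r+\log q$ and $m\gtrsim\tfrac{(r+\log q)}{\eps^2\mu^2 r\kappa^2}\mathrm{NSR}$ (and $n\approx q$), the total is at most $n^{-10}$.

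The Bernstein-plus-net estimates are routine and mirror those imported from \cite{lrpr_gdmin}; the one place requiring care is the noise term, where one must (i) place the Bernstein level inside the sub-Gaussian regime so that the per-column failure probability decays like $e^{-c(r+\log q)}$ — this is exactly what makes the $r$ and $\log q$ in the sample-complexity hypothesis sufficient for the union bound — and (ii) convert the resulting $\sigma_v\sqrt{(r+\log q)/m}$ estimate into the target accuracy $\eps\,\mu\sqrt{r/q}\,\sigmax$ via the definition of $\mathrm{NSR}$, which is precisely the origin of the new $\tfrac{(r+\log q)}{\eps^2\mu^2 r\kappa^2}\mathrm{NSR}$ requirement on $m$.
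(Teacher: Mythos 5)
Your proposal is correct and mirrors the paper's own proof essentially step for step: the same decomposition of $\b_k - \g_k$ into the inverse factor times the noise-free cross term plus the noise cross term, the same lower bound on $\sigma_{\min}(\U^\top\A_k^\top\A_k\U)$, the same sub-exponential Bernstein plus $\eps$-net argument for both cross terms (the noise term handled in the sub-Gaussian regime of Bernstein), and the same use of $\delta_t\ge\eps$ together with $\sigma_v^2 = \mathrm{NSR}\cdot\sigmin^2/q$ to turn the noise-term probability into the $m\gtrsim\frac{(\log q+r)}{\eps^2\mu^2 r\kappa^2}\mathrm{NSR}$ requirement. The only cosmetic difference is that you fix the Bernstein level at $C_0\sigma_v\sqrt{m(r+\log q)}$ and solve forward, while the paper keeps the level $m\eps_2$ free and later substitutes $\eps_2=\delta_t\mu\sqrt{r/q}\,\sigmax$; these are just two parametrizations of the identical calculation.
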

\begin{proof}
	We bound $\|\g_k - \b_{k}\|$ here.
	Recall that $\g_k = \U^\top \xstar_k$.
	Since $\y_k = \A_k\xstar_{k} + \n_k = \A_k\U\U{}^\top\xstar_{k} + \A_k(\I-\U\U{}^\top)\xstar_{k} + \n_k$, therefore
	\begin{align*}
		\b_{k}  &= \left(\U{}^\top\A_k^\top\A_k\U\right)^{-1}(\U{}^\top\A_k^\top)\A_k\U\U{}^\top\xstar_{k} \\&\qquad+ \left(\U{}^\top\A_k^\top\A_k\U\right)^{-1}(\U{}^\top\A_k^\top) \A_k(\I-\U\U{}^\top)\xstar_{k}\\
		& \qquad +\left(\U{}^\top\A_k^\top\A_k\U\right)^{-1}(\U{}^\top\A_k^\top)\n_k\\
		&=\left(\U{}^\top\A_k^\top\A_k\U\right)^{-1}\left(\U{}^\top\A_k^\top\A_k\U\right)\U{}^\top\xstar_{k}  \\&\qquad+ \left(\U{}^\top\A_k^\top\A_k\U\right)^{-1}(\U{}^\top\A_k^\top) \A_k(\I-\U\U{}^\top)\xstar_{k} \\
		& \qquad + \left(\U{}^\top\A_k^\top\A_k\U\right)^{-1}(\U{}^\top\A_k^\top)\n_k\\
		&=\g_k + \\& \left(\U{}^\top\A_k^\top\A_k\U\right)^{-1}\left((\U{}^\top\A_k^\top) \A_k(\I-\U\U{}^\top)\xstar_{k} + (\U{}^\top\A_k^\top)\n_k\right).
	\end{align*}
	Thus,
	\begin{align}
		\|\b_{k} - \g_k\| &\leq \|\left(\U{}^\top\A_k^\top\A_k\U\right)^{-1}\| \nonumber\\& \qquad\times ~\left( \|\U{}^\top\A_k^\top \A_k(\I-\U\U{}^\top)\xstar_{k}\| +  \|\U{}^\top\A_k^\top\n_k\| \right).
		\label{gk_bhatk_bnd}
	\end{align}
	Using standard results from \cite{versh_book}, one can show the following:
	\ben
	\item
	W.p. $\ge 1-q\exp\left(r-cm\right)$, for all $k\in[q]$, $\min_{\w \in \S_r} \sum_i \big|\a_\ik{}^\top\U\w\big|^2 \ge 0.7 m$ and so
	\begin{align*}
		\|\left(\U{}^\top\A_k^\top\A_k\U\right)^{-1}\| &= \frac{1}{\sigma_{\min}\left(\U{}^\top\A_k^\top\A_k\U \right)} \\&= \frac{1}{\min_{\w\in\S_{r}} \sum_i \langle \U^\top \a_\ik , \w \rangle^2 }\\& \le \frac{1}{0.7 m}
	\end{align*}		
	\item W.p. at least $1-q\exp(r-cm)$, $ \forall k\in[q]$,
	\[
	\|\U{}^\top\A_k^\top \A_k(\I-\U\U{}^\top)\xstar_{k}\| \leq    0.15 m \| (\I-\U\U{}^\top)\xstar_{k}\|
	\]
	\item W.p. at least $1- q \exp(r -\frac{m\eps_2^2}{C\sigma_{\n}^2}), \forall k\in[q]$,
	\[
	\|\U{}^\top\A_k^\top\n_k\| \leq    m\eps_2
	\]
	\een
	Combining the above three bounds and \eqref{gk_bhatk_bnd}, w.p. at least $1 - 2\exp(\log q + r - c m) - \exp (\log q + r-\frac{m\eps_2^2}{C\sigma_{\n}^2} )$, $\forall k\in[q]$,
	\begin{align*}
		\|\g_k - \b_{k}\|   &\leq c_1 \|\left(\I_n-\U\U^\top\right)\Ustar\bstar_k\| + c_2\eps_2\\
		& \leq c_1 \delta_t\|\bstar_k\| + c_2\eps_2\\
		&\leq c_1 \delta_t\mu\sqrt{\frac{r}{q}}\sigmax + c_2\eps
	\end{align*}
	
	We have used Right Singular Vectors’ Incoherence Assumption \ref{right_incoh} to bound $\|\bstar_k\|$. Setting $\eps_2=\delta_t\mu\sqrt{\frac{r}{q}}\sigmax  $ we get w.p. at least $1 - 2\exp(\log q + r - c m) - \exp (\log q + r-\frac{m\delta_t^2\mu^2r\sigmax{}^2}{Cq\sigma_{\n}^2} )$, $\forall k\in[q]$
	
	\begin{align*}
		\|\g_k - \b_{k}\|   &\leq c_3\delta_t\mu\sqrt{\frac{r}{q}}\sigmax
	\end{align*}
	
	Since we are showing exponential decay, for a $t < T$, $\delta_t \ge  \delta_T=\eps$.  Therefore using $\delta_t\ge\eps$ where $\eps$ is the final desired error, in the probability we get if $m\geq \log q +r+ \frac{(\log q + r)}{\eps^2\mu^2r\kappa^2}\mathrm{NSR}$, then, whp,
	
	\[\|\g_k - \b_{k}\|   \leq 0.4\delta_t\mu\sqrt{\frac{r}{q}}\sigmax\]
	
	\textbf{Bounding $ \|\U{}^\top\A_k^\top\n_k\|$}
	
	Notice that
	\begin{align*}
		&\|\U{}^\top\A_k^\top\n_k\| \\
		&\qquad= \max_{\w\in\S_{r}} \w{}^\top\U{}^\top\A_k^\top \n_k \\&\qquad=  \max_{\w\in\S_{r}} \sum_i \a_\ik{}^\top\n_\ik\U\w
	\end{align*}
	Clearly $\E\left[ \U{}^\top\A_k^\top\n_k\right] =  0$.
	
	Moreover, the summands are products of sub-Gaussian r.v.s and are thus sub-exponential with sub-exponential norm $ K_{i} = C \sigma_{\n} $. Also, the different summands are mutually independent and zero mean. 
	Applying  sub-exponential Bernstein for a fixed $\w\in\S_{r}$ and with $t=m\eps_2$,
	
	\[\frac{t^2}{\sum_i K_{i}^2}\leq \frac{t^2}{mC\sigma_{\n}^2}=\frac{m\eps_2^2}{C\sigma_{\n}^2}\]
	\[\frac{t}{\max_i K_{i}}\leq \frac{t}{C\sigma_{\n}}=\frac{m\eps_2}{C\sigma_{\n}}\]
	For $\eps_2<\sigma_{\n}$ the first term above is smaller therefore
	$$ | \sum_i \a_\ik{}^\top\n_\ik\U\w | \leq m\eps_2$$
	
	w.p. at least $1-\exp (-\frac{m\eps_2^2}{C\sigma_{\n}^2} )$.
	Using epsilon net argument the above term is bounded by $m\eps_2$ for all $\w \in \S_r$ w.p. at least $1-\exp(r -\frac{m\eps_2^2}{C\sigma_{\n}^2})$. Using a union bound over all $q$ columns, the bound holds for all $q$ columns w.p.  at least $1- q \exp(r -\frac{m\eps_2^2}{C\sigma_{\n}^2})$.
	
\end{proof}

\begin{lemma}
	Assume $\SD_2(\Ustar,\U)\leq \delta_t<\delta_0$. Then,
	w.p. at least $1-  \exp ( C(n+r) -c\frac{\eps_1^2 m q }{\kappa^4 \mu^2 r} )-\exp (C(n+r)- c \frac{\eps_2^2 m q\sigmin^2 }{ \sigma_{\n}^2 q\kappa^2 } )- 2\exp(\log q + r - c m) - \exp (\log q + r-\frac{m\eps^2\mu^2r\sigmax{}^2}{Cq\sigma_{\n}^2} )$ 
	\[
	\norm{\nabla f(\U,\B)-\E[\nabla f(\U,\B)]}\leq \eps_1 \delta_t m \sigmin^2+ \eps_2 m  \sigmin^2
	\]
\end{lemma}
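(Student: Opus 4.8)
The plan is to split $\Err=\nabla f(\U,\B)-\E[\nabla f(\U,\B)]$ as $\mathrm{Term}_{\mathrm{nonoise}}-\mathrm{Term}_{\mathrm{noise}}$ exactly as set up above, bound the two pieces separately, and combine them by $\|\Err\|\le\|\mathrm{Term}_{\mathrm{nonoise}}\|+\|\mathrm{Term}_{\mathrm{noise}}\|$, with the term $\eps_1\delta_t m\sigmin^2$ absorbing the first and $\eps_2 m\sigmin^2$ the second. Throughout, I would work on the high-probability event of Lemmas \ref{Blemma_new2} and \ref{Blemma_new1}, i.e. on the event that the current iterate $\B$ (a deterministic function of the $t$-th measurement set) satisfies $\|\b_k\|\le 1.1\mu\sqrt{r/q}\sigmax$, $\|\x_k-\xstar_k\|\le 1.4\delta_t\mu\sqrt{r/q}\sigmax$ and $\sigma_{\max}(\B)\le 1.1\sigmax$ for all $k$. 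By the sample-splitting in Algorithm \ref{gdmin}, this $\B$ is independent of the fresh matrices $\A_k^{(T+t)},\n_k^{(T+t)}$ that enter the gradient, so $\B$ may be treated as fixed in all probability computations; the failure probability of this conditioning event contributes the $2\exp(\log q+r-cm)+\exp(\log q+r-m\eps^2\mu^2 r\sigmax^2/(Cq\sigma_v^2))$ part of the stated probability.

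For $\mathrm{Term}_{\mathrm{nonoise}}=\sum_k\A_k^\top(\A_k\x_k-\A_k\xstar_k)\b_k^\top-\E[\nabla f]$, I would cite \cite[Lemma 3.5 item 1]{lrpr_gdmin}: its proof bounds $\w^\top\mathrm{Term}_{\mathrm{nonoise}}\z$ for fixed $\w\in\S_n,\z\in\S_r$ by a sub-exponential Bernstein inequality and then runs an $\epsilon$-net argument over $\S_n\times\S_r$, and feeding in the above bounds on $\|\x_k-\xstar_k\|$ and $\|\b_k\|$ yields $\|\mathrm{Term}_{\mathrm{nonoise}}\|\le\eps_1\delta_t m\sigmin^2$ off an event of probability $\exp(C(n+r)-c\eps_1^2 mq/(\kappa^4\mu^2 r))$. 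The genuinely new estimate is $\mathrm{Term}_{\mathrm{noise}}=\sum_k\A_k^\top\n_k\b_k^\top$, an $n\times r$ matrix; I would write its operator norm as $\|\mathrm{Term}_{\mathrm{noise}}\|=\max_{\w\in\S_n,\,\z\in\S_r}\sum_k\sum_i(\a_\ik^\top\w)\,\n_\ik\,(\b_k^\top\z)$. For fixed $\w,\z$ the summands are independent, zero-mean, and each is the product of the standard Gaussian $\a_\ik^\top\w$ with the $\mathcal{N}(0,\sigma_v^2)$ entry $\n_\ik$, scaled by the constant $\b_k^\top\z$, hence sub-exponential with norm $K_k:=C\sigma_v|\b_k^\top\z|$. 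The two Bernstein parameters are $\sum_{k,i}K_k^2=m\sum_k K_k^2=C\sigma_v^2 m\sum_k(\b_k^\top\z)^2=C\sigma_v^2 m\,\z^\top\B\B^\top\z\le C\sigma_v^2 m\,\sigma_{\max}(\B)^2\le C\sigma_v^2 m\sigmax^2$ — the decisive step being to bound $\sum_k(\b_k^\top\z)^2$ by $\|\B\|^2$ and not by the $r$-times-larger $\|\B\|_F^2$ — and $\max_{k,i}K_k\le C\sigma_v\max_k\|\b_k\|\le C\sigma_v\mu\sqrt{r/q}\sigmax$. Applying sub-exponential Bernstein with deviation level $\tau=\eps_2 m\sigmin^2$, and using $\kappa^2=\sigmax^2/\sigmin^2$, the sub-Gaussian (squared) branch of the exponent is $c\,\tau^2/\sum_{k,i}K_k^2\ge c\,\eps_2^2 mq\sigmin^2/(\sigma_v^2 q\kappa^2)$; a union bound over fixed $1/4$-nets of $\S_n$ and $\S_r$ (of sizes $\le 9^n$ and $\le 9^r$), followed by the standard net-to-sphere step with the discretization slack absorbed into the numerical constants, then gives $\|\mathrm{Term}_{\mathrm{noise}}\|\le\eps_2 m\sigmin^2$ off an event of probability $\exp(C(n+r)-c\,\eps_2^2 mq\sigmin^2/(\sigma_v^2 q\kappa^2))$.

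Combining the two bounds and union-bounding their failure events with that of the conditioning event yields the claimed probability and $\|\Err\|\le\eps_1\delta_t m\sigmin^2+\eps_2 m\sigmin^2$. The last assertion of the lemma is immediate and holds deterministically on the conditioning event: since $\E[\nabla f(\U,\B)]=m(\X-\Xstar)\B^\top$, submultiplicativity gives $\|\E[\nabla f(\U,\B)]\|\le m\|\X-\Xstar\|_F\,\|\B\|\le m\,(1.4\sqrt{r}\delta_t\sigmax)(1.1\sigmax)\le Cm\sqrt{r}\delta_t\sigmax^2$ by items 3 and 4 of Lemma \ref{Blemma_new1}.

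The main obstacle is the bookkeeping in the $\mathrm{Term}_{\mathrm{noise}}$ estimate: one must carry the variance proxy as $\z^\top\B\B^\top\z\le\|\B\|^2$ rather than $\|\B\|_F^2$ (this is exactly what produces a $\kappa^2$, and not $\kappa^2 r$, in the exponent, hence the sharper sample complexity), and one must verify that under the stated lower bound on $m$ the sub-Gaussian branch of Bernstein — rather than the heavier linear branch — controls the tail uniformly over the net; in the complementary regime of very small $\sigma_v$ the linear branch is in force but is then strictly more favorable, so the stated bound still holds. Everything else reduces to a clean combination of Lemmas \ref{Blemma_new2} and \ref{Blemma_new1}, \cite[Lemma 3.5]{lrpr_gdmin}, and the standard sub-exponential Bernstein and $\epsilon$-net tools of \cite{versh_book}.
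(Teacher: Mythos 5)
Your decomposition, your treatment of the clean (noise-free) term by citation, and your estimate of $\mathrm{Term}_{\mathrm{noise}}$ by sub-exponential Bernstein plus an $\epsilon$-net over $\S_n\times\S_r$ are all exactly the route the paper takes, and you correctly identify the one nontrivial bookkeeping point: replacing the variance proxy $\sum_k(\b_k^\top\z)^2$ by $\|\B\|^2$ rather than $\|\B\|_F^2$, which is what keeps a factor $\kappa^2$ rather than $\kappa^2 r$ in the exponent. The conditioning on the event from Lemmas \ref{Blemma_new2}--\ref{Blemma_new1}, and the attribution of the $2\exp(\log q+r-cm)+\exp(\log q+r-m\eps^2\mu^2 r\sigmax^2/(Cq\sigma_v^2))$ terms to that event, also match the paper.

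There is one genuine gap. You dismiss the Bernstein branch question by asserting that in the small-$\sigma_v$ regime ``the linear branch is in force but is then strictly more favorable, so the stated bound still holds.'' That is backwards. Bernstein's tail exponent is $c\min\bigl(\tau^2/\sum K_{ik}^2,\;\tau/\max K_{ik}\bigr)$. The quadratic branch $\tau^2/\sum K_{ik}^2$ scales like $1/\sigma_v^2$, the linear branch like $1/\sigma_v$, so for small $\sigma_v$ the linear branch \emph{is} the minimum, hence the actual tail probability is $\exp\bigl(C(n+r)-c\,\tau/\max K_{ik}\bigr)$, which is \emph{larger} than the stated $\exp\bigl(C(n+r)-c\,\eps_2^2 mq\sigmin^2/(\sigma_v^2 q\kappa^2)\bigr)$ whenever the linear branch is active. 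The stated probability therefore does not follow automatically in that regime; one must actually verify that the quadratic branch is the minimum. The paper does this explicitly: it imposes $\eps_2<c/(\sqrt{r}\mu\kappa^2)$ and $c/\kappa^3\le\sqrt{q}\sigma_v/\sigmin$, which together give $\eps_2\le C\,\sqrt{q}\sigma_v\kappa/(\sigmin\mu\sqrt{r})$, i.e.\ $\tau\le \sum K_{ik}^2/\max K_{ik}$, precisely the condition for the quadratic branch to dominate. Your argument needs the same check (and the corresponding hypotheses on $\eps_2$ and $\sigma_v$) spelled out rather than waved away; otherwise the exponent you report is not justified by Bernstein.

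Two cosmetic remarks: you cite \cite[Lemma 3.5 item 1]{lrpr_gdmin} for $\mathrm{Term}_{\mathrm{nonoise}}$, while the paper's appendix in fact cites \cite[Lemma 5 item 3]{lrpr_gdmin_2}; these play the same role, so this is not a substantive deviation. And your final paragraph on $\|\E[\nabla f]\|\le Cm\sqrt{r}\delta_t\sigmax^2$ is correct but was not part of the quoted statement, so it is extra, not a gap.
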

\begin{proof}
	\begin{align*}
		&\nabla f(\U,\B)-\E[\nabla f(\U,\B)]\\&=\sum_k \A_k^\top(\A_k\x_k-\y_k)\b_k^\top - \sum_k m (\x_k-{\xstar_k})\b_k{}^\top \\
		& = \sum_k \A_k^\top(\A_k\x_k-\A_k\xstar_k - \n_k)\b_k^\top - \sum_k m (\x_k-{\xstar_k})\b_k{}^\top \\
		& = \sum_k \A_k^\top(\A_k\x_k-\A_k\xstar_k)\b_k^\top - \sum_k m (\x_k-{\xstar_k})\b_k{}^\top \\&\qquad\qquad- \sum_k \A_k^\top\n_k\b_k^\top\\
		& = \sum_k \A_k^\top(\A_k\x_k-\A_k\xstar_k)\b_k^\top - \E[\nabla f]  - \sum_k \A_k^\top\n_k\b_k^\top\\
		& = \mathrm{Term_{\mathrm{nonoise}}} - \mathrm{Term_{\mathrm{noise}}}
	\end{align*}
	where $\mathrm{Term_{\mathrm{nonoise}}} = \sum_k \A_k^\top(\A_k\x_k-\A_k\xstar_k)\b_k^\top - \E[\nabla f]$, and $\mathrm{Term_{\mathrm{noise}}}=\sum_k \A_k^\top\n_k\b_k^\top$
	
	Proof steps for bounding $\norm{\mathrm{Term_{\mathrm{nonoise}}}}$ are similar to those given \cite[Lemma 5 item 3]{lrpr_gdmin_2} with only change in probability expression i.e.,  w.p. at least $1-  \exp ( C(n+r) -c\frac{\eps_1^2 m q }{\kappa^4 \mu^2 r} )- 2\exp(\log q + r - c m) - \exp (\log q + r-\frac{m\eps^2\mu^2r\sigmax{}^2}{Cq\sigma_{\n}^2} )$,
	\[
	\|\mathrm{Term_{\mathrm{nonoise}}}\| \le  \eps_1 \delta_t m \sigmin^2
	\]
	
	\textbf{Bounding $\norm{\mathrm{Term_{\mathrm{noise}}}}$}
	
	For a fixed $\z\in\S_n, \w\in\S_r$ we have
	\begin{align*}
		&\z{}^\top \sum_k \A_k^\top\n_k\b_k^\top \w \\
		& \qquad = \sum_\ik\left(\a_\ik{}^\top\n_{\ik}\z\right)\left(\w{}^\top\b_{k}\right) \\
	\end{align*}
	
	Observe that the summands are independent, zero mean, sub-exponential r.v.s with sub-exponential norm $K_\ik \le C \| \w \| \cdot |\z^\top \b_k| \cdot \sigma_{\n} =C |\z^\top \b_k| \cdot \sigma_{\n} $.
	We apply the sub-exponential Bernstein inequality, Theorem 2.8.1 of \cite{versh_book}, with $t = \eps_2 m  \sigmin^2$. We have
	\begin{align*}
		\frac{t^2}{\sum_\ik K_\ik^2}  %\max_{g' \neq g} \max_{k \in \S_{g'}}
		& \ge \frac{\eps_2^2 m^2   \sigmin^4}{ m \sigma_{\n}^2 \sum_k (\z^\top \b_k)^2   }\\
		& = \frac{\eps_2^2 m\sigmin^4}{\sigma_{\n}^2 \|\z^\top \B\|^2 }\\
		& \ge \frac{\eps_2^2 m\sigmin^4 }{C\sigma_{\n}^2 \sigmax^2 }\\
		& = \frac{\eps_2^2 m \sigmin^2 }{ C\sigma_{\n}^2 \kappa^2 }
	\end{align*}

	\begin{align*}
		\frac{t}{\max_\ik K_\ik} & \ge  \frac{\eps_2  m  \sigmin^2 }{C\sigma_{\n}\max_k \| \b_k\|  }
		\ge  \frac{\eps_2 m \sqrt{q} \sigmin}{C\kappa\sigma_{\n} \mu \sqrt{r}}
	\end{align*}
	
	In the above, we used
	(i) $\sum_k (\z^\top \b_k)^2 = \|\z^\top \B\|^2 \le \|\B\|^2$  since $\z$ is unit norm,
	(ii) Lemma \ref{Blemma_new1} item 7 to bound $ \|\B\| \le 1.1 \sigmax$, and
	(iii) Lemma \ref{Blemma_new1} item 2 to bound $\|\b_k\| \le 1.1 \mu \sigmax \sqrt{r/q}$. 
	
	For $\eps_2 < \frac{c}{\sqrt{r}\mu\kappa^2}$, $\frac{c}{\kappa^3}\leq \frac{\sqrt{q}\sigma_{\n}}{\sigmin}$ the first term above is smaller , i.e., $\min(\frac{t^2}{\sum_\ik K_\ik^2},\frac{t}{\max_\ik K_\ik} ) =  c \frac{\eps_2^2 m \sigmin^2 }{ \sigma_{\n}^2 \kappa^2 }.$
	Thus, by sub-exponential Bernstein,  w.p. at least  $1-\exp (- c \frac{\eps_2^2 m \sigmin^2 }{ \sigma_{\n}^2 \kappa^2 } )- 2\exp(\log q + r - c m) - \exp (\log q + r-\frac{m\eps^2\mu^2r\sigmax{}^2}{Cq\sigma_{\n}^2} )$, for a given $\w,\z$,
	\[
	\w^\top\mathrm{Term_{\mathrm{noise}}}\ \z \le  \eps_2 m  \sigmin^2
	\]
	Using a standard epsilon-net argument to bound the maximum of the above over all unit norm $\w,\z$, e.g., using \cite[Proposition 4.7]{lrpr_gdmin}, we can conclude that
	\[
	\|\mathrm{Term_{\mathrm{noise}}} \| \le  1.1 \eps_2 m  \sigmin^2
	\]
	w.p. at least $1-\exp (C(n+r)- c \frac{\eps_2^2 m q\sigmin^2 }{ \sigma_{\n}^2 q\kappa^2 } )- 2\exp(\log q + r - c m) - \exp (\log q + r-\frac{m\eps^2\mu^2r\sigmax{}^2}{Cq\sigma_{\n}^2} )$. The factor of $ \exp(C(n+r))$ is due to the epsilon-net over $\w$ and that over $\z$: $\w$ is an $n$-length unit norm vector while $\z$ is an $r$-length unit norm vector. The smallest epsilon net covering the hyper-sphere of all $\w$'s  is of size $(1+ 2/\eps_{net})^n = C^n$ with $\eps_{net}=c$ while that for $\z$ is of size $C^r$. Union bounding over both thus gives a factor of $C^{n+r}$.
	By replacing $\eps_2$ by $\eps_2/1.1$, our bound becomes simpler (and $1/1.1^2$ gets incorporated into the factor $c$).

	Combining bounds for $\|\mathrm{Term_{\mathrm{nonoise}}}\|$ and $\|\mathrm{Term_{\mathrm{noise}}}\|$ we get  w.p. at least $1-  \exp ( C(n+r) -c\frac{\eps_1^2 m q }{\kappa^4 \mu^2 r} )-\exp (C(n+r)- c \frac{\eps_2^2 m q\sigmin^2 }{ \sigma_{\n}^2 q\kappa^2 } )- 2\exp(\log q + r - c m) - \exp (\log q + r-\frac{m\eps^2\mu^2r\sigmax{}^2}{Cq\sigma_{\n}^2} )$
	
	\[\norm{\nabla f(\U,\B)-\E[\nabla f(\U,\B)]}\leq \eps_1 \delta_t m \sigmin^2+ \eps_2 m  \sigmin^2	\]
\end{proof}
\section{Proof of the Initialization result}\label{init_thms}
\begin{lemma} \label{Wedinlemma}
	Conditioned on $\alpha$, we have the following conclusions.
	\ben
	\item Let $\bzeta$  be a scalar standard Gaussian r.v.. Define
	\[
	\w_k(\alpha)=\E\left[\bzeta^2 \indic_{ \left\{ |\bzeta| \leq \frac{\sqrt{\alpha}}{\sqrt{\|\xstar_{k}\|^2+\sigma_{\n}^2}} \right\} } \right].
	\]
	Then,
	\begin{align}
		&\E[\Xhat_0|\alpha] = \Xstar \D(\alpha), \nonumber\\
		&\text{ where }  \D(\alpha):=diagonal(\w_k(\alpha),k \in [q])
		\label{X0}
	\end{align}
	i.e. $\D(\alpha)$ is a diagonal matrix of size $q\times q$ with diagonal entries $\w_k(\alpha)$ defined above.
	
	\item Let $\E[\Xhat_{0}|\alpha] = \Xstar \D(\alpha) \svdeq \Ustar \check\bSigma \Bcheck$ be its $r$-SVD. Then,
	\begin{align}\label{Wedin_main}
		& \SD_2(\U_0,\Ustar) \le \nonumber \\
		&   \dfrac{\sqrt{2} \max\left( \| (\Xhat_0 - \E[\Xhat_{0}|\alpha] )^\top \Ustar \|_F , \| (\Xhat_0 - \E[\Xhat_{0}|\alpha] )  \Bcheck{}^\top \|_F \right)}{\sigmin \min_k \w_k(\alpha) - \|\Xhat_0 - \E[\Xhat_{0}|\alpha] \|}
	\end{align}
	as long as the denominator is non-negative.
	\een
\end{lemma}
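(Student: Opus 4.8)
The plan is as follows. Item~1 coincides verbatim with Lemma~\ref{X_0}, which is already proved above; it is restated here only because item~2 invokes the decomposition $\E[\Xhat_0|\alpha]=\Xstar\D(\alpha)$. So all the work is in item~2, which is a perturbation ($\sin\Theta$) argument applied to $\Xhat_0$ and its conditional mean $M:=\E[\Xhat_0|\alpha]=\Xstar\D(\alpha)$.

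First I would record the structural facts about $M$. Each diagonal entry $\w_k(\alpha)=\E[\bzeta^2\indic_{\{|\bzeta|\le\sqrt\alpha/\sqrt{\|\xstar_k\|^2+\sigma_{\n}^2}\}}]$ is the expectation of a nonnegative random variable that is strictly positive on an event of positive probability (the truncation level $\sqrt\alpha/\sqrt{\|\xstar_k\|^2+\sigma_{\n}^2}$ is positive), hence $\w_k(\alpha)>0$ and $\D(\alpha)$ is invertible. Consequently $\Xstar\D(\alpha)$ has the same column span as $\Xstar$, namely $\mathrm{span}(\Ustar)$, so $\Ustar$ is a legitimate rank-$r$ left singular factor of $M$ (any other choice differs by an immaterial $r\times r$ orthogonal rotation, under which $\SD_2$ and the Frobenius norms appearing below are invariant); this is what justifies writing $M\svdeq\Ustar\check\bSigma\Bcheck$. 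Moreover, writing $\Xstar=\Ustar\Bstar$ and using that left multiplication by a matrix with orthonormal columns preserves singular values,
\[
\sigma_r(M)=\sigma_r(\Ustar\Bstar\D(\alpha))=\sigma_r(\Bstar\D(\alpha))\ge\sigma_r(\Bstar)\,\sigma_{\min}(\D(\alpha))=\sigmin\,\min_k\w_k(\alpha),
\]
since $\sigma_{\min}(\D(\alpha))=\min_k\w_k(\alpha)$ and $\sigma_r(\Bstar)=\sigma_r(\Xstar)=\sigmin$.

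Then I would apply Wedin's $\sin\Theta$ theorem in the two-sided, Frobenius-norm form used in \cite[Lemma 3.6]{lrpr_gdmin}, to the pair $M$ and $\Xhat_0=M+H$ with $H:=\Xhat_0-\E[\Xhat_0|\alpha]$: since $\U_0$ consists of the top-$r$ left singular vectors of $\Xhat_0$ and $\Ustar,\Bcheck$ are the left and right singular factors of $M$,
\[
\SD_2(\U_0,\Ustar)\le\frac{\sqrt2\,\max\left(\|H^\top\Ustar\|_F,\ \|H\Bcheck{}^\top\|_F\right)}{\sigma_r(M)-\sigma_{r+1}(\Xhat_0)},
\]
valid whenever the denominator is positive. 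It then remains only to lower-bound the denominator: since $M$ has rank $r$ we have $\sigma_{r+1}(M)=0$, so Weyl's inequality gives $\sigma_{r+1}(\Xhat_0)\le\sigma_{r+1}(M)+\|H\|=\|\Xhat_0-\E[\Xhat_0|\alpha]\|$; combining with the bound on $\sigma_r(M)$ yields $\sigma_r(M)-\sigma_{r+1}(\Xhat_0)\ge\sigmin\min_k\w_k(\alpha)-\|\Xhat_0-\E[\Xhat_0|\alpha]\|$, and replacing the denominator by this smaller (still positive) quantity only weakens the bound, giving exactly \eqref{Wedin_main}.

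The one point requiring care is invoking the correct form of Wedin's theorem: the numerator must be built from the singular subspaces of the \emph{unperturbed} matrix $M$ (producing the factors $\Ustar$ and $\Bcheck$), and the gap in the denominator must be $\sigma_r(M)-\sigma_{r+1}(\Xhat_0)$ rather than $\sigma_r(M)-\sigma_{r+1}(M)$; everything else — positivity of the $\w_k(\alpha)$, the singular-value lower bound, and the Weyl bound on $\sigma_{r+1}(\Xhat_0)$ — is routine linear algebra. The subsequent high-probability control of $\|H^\top\Ustar\|_F$, $\|H\Bcheck{}^\top\|_F$ and $\|H\|$ (via sub-Gaussian Hoeffding plus an $\epsilon$-net, as outlined in Section~\ref{init_proof_sec}) is not part of this lemma and is carried out in the proof of Theorem~\ref{init_thm}.
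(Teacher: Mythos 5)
Your proposal is correct and follows essentially the same route the paper takes: item 1 is identical to Lemma \ref{X_0} (already proved), and item 2 is exactly the Wedin $\sin\Theta$ plus Weyl argument that the paper imports by reference to \cite[Lemma 3.6]{lrpr_gdmin}, with the same use of $\sigma_r(\Xstar\D(\alpha))\ge\sigmin\min_k\w_k(\alpha)$ to lower-bound the unperturbed $r$-th singular value and $\sigma_{r+1}(\Xhat_0)\le\|\Xhat_0-\E[\Xhat_0|\alpha]\|$ to control the gap. The only thing you make explicit that the paper leaves implicit is the remark that $\w_k(\alpha)>0$ ensures $\D(\alpha)$ is invertible so that $\Ustar$ remains a valid left singular factor of $\E[\Xhat_0|\alpha]$, which is a correct and worthwhile clarification.
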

Define the set $\mathcal{E}$ as follows
\begin{align}\label{def_ev}
	&\mathcal{E}:= \left\{ \tC(1 - \epsilon_1)\left(\frac{\|\Xstar\|_F^2}{q}+ \sigma_{\n}^2\right)  \le \alpha \right. \\ &\left. \qquad \qquad \le \tC (1 + \epsilon_1) \left(\frac{\|\Xstar\|_F^2}{q}+ \sigma_{\n}^2\right)  \right\}.
\end{align}
\begin{fact}\label{sumyik_bnd}
	$\Pr(\alpha \in \mathcal{E}) \ge 1 - \exp(- \tc mq \epsilon_1^2):= 1 - p_\alpha$.  Here  $\tc = c/\tC = c / \kappa^2 \mu^2.$
\end{fact}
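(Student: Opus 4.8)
The plan is to read the event $\mathcal{E}$ as a two-sided concentration statement for the scalar $\sum_{k\in[q]}\norm{\y_k}^2$ about its mean, and then apply the sub-exponential Bernstein inequality. First I would compute the mean. Since $(\y_k)_i=\a_{ki}^\top\xstar_k+\n_{ki}$ is a zero-mean Gaussian with variance $\norm{\xstar_k}^2+\sigma_{\n}^2$, and these are mutually independent over $i\in[m]$, $k\in[q]$, we get $\E\big[\sum_k\norm{\y_k}^2\big]=m\sum_k(\norm{\xstar_k}^2+\sigma_{\n}^2)=m(\norm{\Xstar}_F^2+q\sigma_{\n}^2)$. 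Hence $\E[\alpha]=9\kappa^2\mu^2(\norm{\Xstar}_F^2/q+\sigma_{\n}^2)$, which identifies $\tC=9\kappa^2\mu^2$ (i.e. $\kappa^2\mu^2$ up to a numerical constant) and shows that $\mathcal{E}$ is exactly the event $\big|\sum_k\norm{\y_k}^2-m(\norm{\Xstar}_F^2+q\sigma_{\n}^2)\big|\le\eps_1\, m(\norm{\Xstar}_F^2+q\sigma_{\n}^2)$.

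Next I would note that $\sum_k\norm{\y_k}^2=\sum_{k,i}(\y_k)_i^2$ is a sum of $mq$ independent terms, each a scaled $\chi^2_1$ variable: $(\y_k)_i^2$ has mean $\norm{\xstar_k}^2+\sigma_{\n}^2$ and sub-exponential norm $K_{ki}\le C(\norm{\xstar_k}^2+\sigma_{\n}^2)$, and after centering the summands are independent and zero-mean. I would then apply the sub-exponential Bernstein inequality (Theorem 2.8.1 of \cite{versh_book}) with deviation $t=\eps_1 m(\norm{\Xstar}_F^2+q\sigma_{\n}^2)$, giving a failure probability at most $2\exp\!\big(-c\min(t^2/\sum_{k,i}K_{ki}^2,\ t/\max_{k,i}K_{ki})\big)$.

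The only place that uses problem structure is bounding $\max_{k,i}K_{ki}$ and $\sum_{k,i}K_{ki}^2$. Using Assumption \ref{right_incoh} (so $\norm{\xstar_k}=\norm{\bstar_k}\le\mu\sqrt{r/q}\,\sigmax$) together with $r\sigmax^2\le\kappa^2\sum_{j=1}^r(\sigma_j^*)^2=\kappa^2\norm{\Xstar}_F^2$ (all $r$ singular values are $\ge\sigmin=\sigmax/\kappa$), one gets $\norm{\xstar_k}^2+\sigma_{\n}^2\le\mu^2\kappa^2(\norm{\Xstar}_F^2/q+\sigma_{\n}^2)=\tfrac{\mu^2\kappa^2}{q}(\norm{\Xstar}_F^2+q\sigma_{\n}^2)$. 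This controls $\max_{k,i}K_{ki}\le\tfrac{C\mu^2\kappa^2}{q}(\norm{\Xstar}_F^2+q\sigma_{\n}^2)$, and $\sum_{k,i}K_{ki}^2=m\sum_kC^2(\norm{\xstar_k}^2+\sigma_{\n}^2)^2\le C^2m\big(\max_k(\norm{\xstar_k}^2+\sigma_{\n}^2)\big)\sum_k(\norm{\xstar_k}^2+\sigma_{\n}^2)\le\tfrac{C^2\mu^2\kappa^2 m}{q}(\norm{\Xstar}_F^2+q\sigma_{\n}^2)^2$. Substituting, $t^2/\sum K_{ki}^2\ge c\,\eps_1^2 mq/(\mu^2\kappa^2)$ and $t/\max K_{ki}\ge c\,\eps_1 mq/(\mu^2\kappa^2)$; since $\eps_1<1$ the first branch dominates, so the failure probability is at most $2\exp(-c\,\eps_1^2 mq/(\mu^2\kappa^2))=2\exp(-\tc\,\eps_1^2 mq)$ with $\tc=c/(\kappa^2\mu^2)=c/\tC$, the factor $2$ being absorbed into the constant. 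I do not expect any real obstacle here; the only care needed is in lining up the constants and in confirming that $\eps_1<1$ keeps the quadratic branch of Bernstein active.
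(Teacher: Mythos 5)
Your proposal is correct and follows essentially the same route as the paper: write $\alpha$ as $\tC/(mq)$ times the sum of the $mq$ independent sub-exponential variables $(\y_k)_i^2$ (each with sub-exponential norm $C(\|\xstar_k\|^2+\sigma_v^2)$) and apply the sub-exponential Bernstein inequality. The paper's proof is terse and omits the arithmetic you spell out — in particular, that the Assumption~\ref{right_incoh} bound $\|\xstar_k\|^2\le\mu^2\kappa^2\|\Xstar\|_F^2/q$ is what makes $\max_k K_{ki}$ and $\sum_{k,i}K_{ki}^2$ small enough to produce the $mq/(\mu^2\kappa^2)$ exponent rather than just $m$, and that $\eps_1<1$ keeps the quadratic branch of Bernstein active — but these are exactly the steps the paper implicitly relies on.
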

\begin{proof}
	\[\alpha=\tC\frac{\sum_{\ik}(\y_{\ik})^2}{mq}=\tC\frac{\sum_{\ik}(\a_{\ik}^\top\xstar_k + \n_{\ik})^2}{mq}\]	
	
	The summands are independent sub-exponential r.v.s with sub-exponential norm $C(\norm{\xstar_k}^2+\sigma_{\n}^2)$. The fact is then an immediate consequence of sub-exponential Bernstein inequality for bounding $\left|\alpha - \left(\frac{\|\Xstar\|_F^2}{q}+ \sigma_{\n}^2\right)\right|$.
\end{proof}
\begin{fact}
	\label{betak_bnd}
	For any $\epsilon_1 \leq 0.1$,
	$
	\min_k  \w_k(\alpha)=\min_k\E\left[\bzeta^2 \indic_{ \left\{ |\bzeta| \leq \frac{\sqrt{\tC(1-\epsilon_1)}}{\sqrt{\|\xstar_{k}\|^2+\sigma_{\n}^2}}\sqrt{\frac{\|\Xstar\|_F^2}{q}+\sigma_{\n}^2} \right\} } \right] \geq 0.92.
	$
\end{fact}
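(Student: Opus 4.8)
The plan is to establish this in three moves — a monotonicity reduction, a lower bound on the effective truncation threshold, and a short Gaussian tail estimate — and the interesting step is the second one. Recall $\w_k(\alpha)=\E[\bzeta^2\indic_{\{|\bzeta|\le \sqrt{\alpha}/\sqrt{\|\xstar_k\|^2+\sigma_{\n}^2}\}}]$ and, from the definition of $\alpha$ in Algorithm~\ref{gdmin}, that $\tC = 9\kappa^2\mu^2$. First I would note that $\theta\mapsto\E[\bzeta^2\indic_{\{|\bzeta|\le\theta\}}]$ is nondecreasing (its derivative is $2\theta^2\phi(\theta)\ge0$) and that the threshold $\sqrt{\alpha}/\sqrt{\|\xstar_k\|^2+\sigma_{\n}^2}$ increases with $\alpha$; hence each $\w_k$, and thus $\min_k\w_k$, is nondecreasing in $\alpha$. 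On the event $\mathcal{E}$ of \eqref{def_ev} one has $\alpha\ge \tC(1-\epsilon_1)\big(\|\Xstar\|_F^2/q+\sigma_{\n}^2\big)$, and substituting this value into the threshold yields exactly $\theta_k := \frac{\sqrt{\tC(1-\epsilon_1)}}{\sqrt{\|\xstar_k\|^2+\sigma_{\n}^2}}\sqrt{\|\Xstar\|_F^2/q+\sigma_{\n}^2}$, the expression in the statement. So on $\mathcal{E}$ it remains to show $\E[\bzeta^2\indic_{\{|\bzeta|\le\theta_k\}}]\ge 0.92$ for each $k$; this is the meaning of the displayed ``$=$'', which should be read as ``$\ge$ its value at the smallest admissible $\alpha$''.

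The crux is showing that $\theta_k$ is bounded below by an absolute constant, which is precisely why $\alpha$ carries the factor $9\kappa^2\mu^2$. By Assumption~\ref{right_incoh}, $\|\xstar_k\|^2 = \|\bstar_k\|^2 \le \mu^2 (r/q)\sigmax{}^2$. On the other hand $\|\Xstar\|_F^2 = \sum_{j=1}^r(\sigma_j^*)^2 \ge r\sigmin{}^2 = r\sigmax{}^2/\kappa^2$, so $\|\Xstar\|_F^2/q \ge \|\xstar_k\|^2/(\kappa^2\mu^2)$. Since $\kappa^2\mu^2\ge1$, this lower bound persists after adding $\sigma_{\n}^2$ to both numerator and denominator, i.e. $\big(\|\Xstar\|_F^2/q+\sigma_{\n}^2\big)/\big(\|\xstar_k\|^2+\sigma_{\n}^2\big)\ge 1/(\kappa^2\mu^2)$. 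Plugging in $\tC=9\kappa^2\mu^2$ gives $\theta_k^2 \ge 9(1-\epsilon_1)\ge 8.1$ for $\epsilon_1\le 0.1$, with all of $\kappa,\mu,r,q$ cancelling; hence $\theta_k\ge\sqrt{8.1}$ uniformly in $k$.

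Finally, using $\E[\bzeta^2]=1$ and monotonicity once more, $\E[\bzeta^2\indic_{\{|\bzeta|\le\theta_k\}}] \ge 1 - \E[\bzeta^2\indic_{\{|\bzeta|>\sqrt{8.1}\}}]$, and integration by parts (via $\phi'(z)=-z\phi(z)$) gives $\E[\bzeta^2\indic_{\{|\bzeta|>\theta\}}] = 2\big(\theta\phi(\theta)+\bar\Phi(\theta)\big)$; at $\theta=\sqrt{8.1}\approx 2.85$, together with $\bar\Phi(\theta)\le \phi(\theta)/\theta$, this is comfortably below $0.08$, so $\min_k\w_k(\alpha)\ge0.92$ on $\mathcal{E}$. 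I expect the second step to be the main obstacle — not for its length but because it requires combining the incoherence upper bound on $\|\xstar_k\|$ with the $\|\Xstar\|_F^2\ge r\sigmin{}^2$ lower bound and the exact constant $9\kappa^2\mu^2$ so that the truncation level comes out free of $r,\kappa,\mu$; the tail estimate in the last step is routine but does need the genuine Gaussian integral (a crude Chebyshev bound would give only $\E[\bzeta^4]/8.1 = 3/8.1\approx 0.37$, far too weak).
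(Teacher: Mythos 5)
Your proof is correct and takes essentially the same route as the paper: interpret the displayed ``$=$'' as the lower bound obtained by substituting $\alpha\ge \tC(1-\epsilon_1)(\|\Xstar\|_F^2/q+\sigma_\n^2)$ on $\mathcal{E}$, use incoherence together with $\|\Xstar\|_F^2\ge r\sigmin^2$ to show the threshold $\theta_k$ is at least an absolute constant (independent of $\kappa,\mu,r,q$), and then apply a Gaussian tail estimate. In fact your version is somewhat more careful than the paper's: you correctly keep the factor $\sqrt{1-\epsilon_1}$, giving $\theta_k\ge\sqrt{8.1}$, where the paper asserts $\gamma_k\ge 3$ and silently drops it; and your tail bound via integration by parts and the Mills ratio cleanly yields a value below $0.05$, while the paper's $z e^{-z^2/4}$ trick, taken literally, gives $1-4e^{-11/4}/\sqrt{\pi}\approx 0.86$ rather than the stated $1-2e^{-11/4}/\sqrt{\pi}$ (the conclusion is nonetheless true, as your tighter computation confirms).
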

\begin{proof}
	Let $\gamma_k = \frac{\sqrt{\tC(1-\epsilon_1)\left(\frac{\|\Xstar\|_F^2}{q}+\sigma_{\n}^2\right)}}{\sqrt{\|\xstar_{k}\|^2+\sigma_{\n}^2}}$. 
	Since $\|\xstar_{k}\|^2\leq \mu^2\kappa^2\|\Xstar\|_F^2/q$ (Assumption \ref{right_incoh}) we get
	\begin{align*}
		\gamma_k&\geq \frac{\sqrt{\tC(1-\epsilon_1)\left(\frac{\|\Xstar\|_F^2}{q}+\sigma_{\n}^2\right)}}{\sqrt{\mu^2\kappa^2\frac{\|\Xstar\|_F^2}{q}+\sigma_{\n}^2}}\\
		& \text{Since $\mu^2\kappa^2\geq 1$} \\
		&\geq \frac{\sqrt{\tC(1-\epsilon_1)\left(\frac{\|\Xstar\|_F^2}{q}+\sigma_{\n}^2\right)}}{\mu\kappa\sqrt{\frac{\|\Xstar\|_F^2}{q}+\sigma_{\n}^2}}\\
		&\text{Since $\tC = 9\mu^2\kappa^2$}\\
		&\geq 3
	\end{align*}
	Now,
	\begin{align*}
		\E\left[ \bzeta^2\indic_{\left\{ |\bzeta| \leq \gamma_k \right\}} \right]
		=& 1 - \E\left[ \bzeta^2\indic_{\left\{ |\bzeta| \geq \gamma_k \right\}} \right]\\
		\ge &  1 - \frac{2}{\sqrt{2\pi}}\int_{3}^{\infty}z^2\exp(-z^2/2)dz \\
		\geq & 1 - \frac{2e^{-1/2}}{\sqrt{\pi}}\int_{3}^{\infty}z\exp(-z^2/4)dz
		\\&=  1 - \frac{2e^{-11/4}}{\sqrt{\pi}}  \geq 0.92.
	\end{align*}
	The first inequality used $\gamma_k \ge 3$. The second used the fact that $z\exp(-z^2/4) \leq \sqrt{2e}$ for all $z\in \Re$.
\end{proof}

\begin{lemma} \label{init_terms_bnd}
	Fix $0 < \epsilon_1 < 1$ then,
	\ben
	\item \label{Xhat0_1}
	w.p. at least $1-\exp\left[(n+q)-c\epsilon_1^2mq/\mu^2\kappa^2\right]$, conditioned on $\alpha$, for an $\alpha \in\mathcal{E} $,
	\[
	\|\Xhat_{0} -\E[\Xhat_{0}|\alpha]\| \leq 1.1 \epsilon_1 \left(\|\Xstar\|_F+ \sqrt{q}\sigma_{\n}\right)
	\]
	
	\item
	\label{Xhat0_Ustar_1}
	w.p. at least $1-\exp\left[ qr - c \epsilon_1^2 mq / \mu^2\kappa^2\right]$, conditioned on $\alpha$, for an $\alpha \in \mathcal{E}$,
	\[
	\|\left(\Xhat_{0} - \E[\Xhat_{0}|\alpha]\right){}^\top\Ustar\|_F \leq 1.1 \epsilon_1 \left(\|\Xstar\|_F+ \sqrt{q}\sigma_{\n}\right)
	\]
	
	\item
	\label{lem:init_nom_B_term2}
	\label{Xhat0_Bstar_1}
	w.p. at least $1-\exp\left[nr - c \epsilon_1^2mq/\mu^2\kappa^2\right]$, conditioned on $\alpha$, for an $\alpha \in \mathcal{E}$,
	\[
	\|\left(\Xhat_{0} - \E[\Xhat_{0}|\alpha]\right)\Bcheck{}^\top\|_F \leq 1.1  \epsilon_1 \left(\|\Xstar\|_F+ \sqrt{q}\sigma_{\n}\right)
	\]
	
	\een
\end{lemma}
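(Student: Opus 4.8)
The plan is to establish all three bounds with one recipe: a sub-Gaussian Hoeffding estimate for a fixed test direction, promoted to a uniform bound by an $\epsilon$-net argument, the three parts differing only in the set over which the net is taken. Throughout we work conditionally on $\alpha$, regarded as a fixed value in the interval $\mathcal{E}$ of \eqref{def_ev}; the mild dependence of the samples $\{\a_{ik},\n_{ik}\}$ on $\alpha$ — which arises because $\alpha$ is computed from those same samples — is handled exactly as in \cite[Lemma 3.8]{lrpr_gdmin}: one first proves the bound for every fixed $\beta$ on a fine grid of $\mathcal{E}$, for which $\Xhat_0(\beta)=\tfrac1m\sum_{i,k}\a_{ik}(\y_{ik})_{\trnc}$ is a genuine sum of independent terms and $\E[\Xhat_0(\beta)]=\Xstar\D(\beta)$ by Lemma \ref{X_0}; one then passes to all of $\mathcal{E}$ using monotonicity of the relevant quantities in $\beta$ at the cost of only a $\log(mq)$ factor in the exponent; and one finally invokes Fact \ref{sumyik_bnd} to place the true (random) $\alpha$ in $\mathcal{E}$ whp.

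The core estimate is as follows. Fix any deterministic $\Z\in\Re^{n\times q}$ with $\|\Z\|_F\le 1$ and columns $\z_k$. Then
\[
\langle \Xhat_0-\E[\Xhat_0|\alpha],\,\Z\rangle=\frac1m\sum_{i,k}\Big((\a_{ik}^\top\z_k)(\y_{ik})_{\trnc}-\E\big[(\a_{ik}^\top\z_k)(\y_{ik})_{\trnc}\big]\Big),
\]
a sum of $mq$ independent, zero-mean terms (independence across $(i,k)$ follows from independence of the rows of each $\A_k$, mutual independence of the $\A_k$, and independence of the noise). Because truncation forces $|(\y_{ik})_{\trnc}|\le\sqrt{\alpha}$ deterministically, each summand obeys $|(\a_{ik}^\top\z_k)(\y_{ik})_{\trnc}|\le\sqrt{\alpha}\,|\a_{ik}^\top\z_k|$ and is therefore sub-Gaussian with $\psi_2$-norm $\lesssim\sqrt{\alpha}\,\|\z_k\|$ — this holds irrespective of the dependence between $\a_{ik}$ and $\y_{ik}$, which is what lets us avoid the rotation argument used for Lemma \ref{X_0}. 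Hence the centered sum has $\psi_2$-norm squared $\lesssim\alpha\sum_{i,k}\|\z_k\|^2\le\alpha m$, and sub-Gaussian Hoeffding (Thm.~2.6.3 of \cite{versh_book}) gives $\Pr\big(|\langle \Xhat_0-\E[\Xhat_0|\alpha],\Z\rangle|>t\big)\le 2\exp(-cmt^2/\alpha)$. On $\mathcal{E}$ one has $\alpha\le\tC(1+\epsilon_1)(\|\Xstar\|_F^2/q+\sigma_{\n}^2)$ with $\tC=9\mu^2\kappa^2$, so taking $t$ a small multiple of $\epsilon_1\sqrt{\alpha q/\tC}$ makes $t\le\epsilon_1(\|\Xstar\|_F+\sqrt{q}\sigma_{\n})$ while $mt^2/\alpha\asymp\epsilon_1^2 mq/(\mu^2\kappa^2)$.

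The three bounds now follow by choosing $\Z$ and the net accordingly, union-bounding the core estimate over the net, and intersecting with $\{\alpha\in\mathcal{E}\}$ (whose complement has probability $\exp(-\tc mq\epsilon_1^2)$, of the same order):
\begin{enumerate}
\item For the first bound, $\|\Xhat_0-\E[\Xhat_0|\alpha]\|=\max_{\|\z\|=1,\|\w\|=1}\langle \Xhat_0-\E[\Xhat_0|\alpha],\z\w^\top\rangle$ with $\Z=\z\w^\top$, $\|\Z\|_F=1$; $\tfrac14$-nets of the two unit spheres (sizes $9^n$, $9^q$) reduce the maximum up to a constant, giving failure probability $\exp((n+q)-c\epsilon_1^2 mq/(\mu^2\kappa^2))$.
\item For the second, $\|(\Xhat_0-\E[\Xhat_0|\alpha])^\top\Ustar\|_F=\max_{\W\in\Re^{q\times r},\|\W\|_F=1}\langle \Xhat_0-\E[\Xhat_0|\alpha],\Ustar\W^\top\rangle$ with $\Z=\Ustar\W^\top$, $\|\Z\|_F=\|\W\|_F=1$ since $\Ustar$ has orthonormal columns; netting the unit sphere of $\Re^{qr}$ gives failure probability $\exp(qr-c\epsilon_1^2 mq/(\mu^2\kappa^2))$.
\item For the third, $\|(\Xhat_0-\E[\Xhat_0|\alpha])\Bcheck^\top\|_F=\max_{\W\in\Re^{n\times r},\|\W\|_F=1}\langle \Xhat_0-\E[\Xhat_0|\alpha],\W\Bcheck\rangle$ with $\Z=\W\Bcheck$, $\|\Z\|_F=\|\W\|_F=1$ since $\Bcheck$ has orthonormal rows; netting the unit sphere of $\Re^{nr}$ gives failure probability $\exp(nr-c\epsilon_1^2 mq/(\mu^2\kappa^2))$.
\end{enumerate}
In each case the slack from the net approximation is absorbed into constants, turning the per-point level $\epsilon_1(\|\Xstar\|_F+\sqrt{q}\sigma_{\n})$ into the stated $1.1\,\epsilon_1(\|\Xstar\|_F+\sqrt{q}\sigma_{\n})$.

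I expect the main obstacle to be not any single step above — each is routine — but the bookkeeping around conditioning on $\alpha$: since $\alpha$ is a single data-dependent truncation level shared by all $mq$ summands, conditioning on it does not literally leave the $\y_{ik}$ independent, so the grid-over-$\mathcal{E}$ device, together with verifying the requisite monotonicity/Lipschitz behaviour of $\Xhat_0(\beta)$ and $\E[\Xhat_0(\beta)]=\Xstar\D(\beta)$ in $\beta$, is the one genuinely delicate ingredient; once the deterministic bound $|(\y_{ik})_{\trnc}|\le\sqrt{\alpha}$ is used to neutralize the $\a_{ik}$–$\y_{ik}$ dependence inside each summand, the Hoeffding-plus-net part goes through with the direct techniques of \cite{versh_book}.
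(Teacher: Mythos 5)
Your proof follows the paper's stated approach: sub-Gaussian Hoeffding for a fixed test matrix $\Z$ with $\|\Z\|_F\le 1$ (the deterministic truncation bound $|(\y_\ik)_\trnc|\le\sqrt{\alpha}$ makes each summand $(\a_\ik^\top\z_k)(\y_\ik)_\trnc$ sub-Gaussian with $\psi_2$-norm $\lesssim\sqrt{\alpha}\|\z_k\|$ regardless of the $\a$--$\y$ dependence, and on $\mathcal{E}$ the resulting exponent is $\asymp\epsilon_1^2 mq/(\mu^2\kappa^2)$), followed by $\epsilon$-nets of size $C^{n+q}$, $C^{qr}$, $C^{nr}$ corresponding to the three choices $\Z=\z\w^\top$, $\Z=\Ustar\W^\top$, $\Z=\W\Bcheck$. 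The conditioning-on-$\alpha$ subtlety you flag is real — since $\alpha$ is computed from the same samples, conditioning does not preserve independence — and the grid-over-$\mathcal{E}$ device you describe is indeed how \cite[Lemma 3.8]{lrpr_gdmin} makes the lemma's ``conditioned on $\alpha$, for $\alpha\in\mathcal{E}$'' statement rigorous, so your argument is complete modulo that inherited step; the only slip is the missing rank-one factor in your formula, which should read $\Xhat_0(\beta)=\tfrac1m\sum_{i,k}\a_\ik(\y_\ik)_\trnc\e_k^\top$.
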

\begin{proof}[Proof of Theorem \ref{init_thm}]
	Set  $\epsilon_1 = 0.4 \delta_0 \sigmin/ \left(\|\Xstar\|_F+ \sqrt{q}\sigma_{\n}\right) $. Define
	$
	p_0 = 2\exp( (n+q)- c mq \delta_0^2\sigmin{}^2 /\mu^2\kappa^2(r\sigmax{}^2 + q\sigma_{\n}^2)  ) + 2\exp( n r - c mq \delta_0^2\sigmin{}^2 / \mu^2\kappa^2(r\sigmax{}^2 + q\sigma_{\n}^2) ) + 2\exp( q r - c mq \delta_0^2\sigmin{}^2 / \mu^2\kappa^2(r\sigmax{}^2 + q\sigma_{\n}^2) ).
	$
	Recall that  $\Pr(\alpha \in \mathcal{E}) \ge 1 - p_\alpha$ with
	$
	p_\alpha = \exp(- \tc mq \epsilon_1^2) = \exp(- c mq \delta_0^2\sigmin{}^2  /\mu^2\kappa^2(r\sigmax{}^2 + q\sigma_{\n}^2) ).
	$
	
	Using Lemma \ref{init_terms_bnd}, conditioned on $\alpha$, for an $\alpha \in \mathcal{E}$,
	\bi
	\item  w.p. at least $ 1-p_0$,
	$
	\|\Xhat_0 - \E[\Xhat_{0}|\alpha]\| \le 1.1  \epsilon_1 \left(\|\Xstar\|_F+ \sqrt{q}\sigma_{\n}\right)  =  0.44 \delta_0 \sigmin,
	$ and %
	{\small
		$
		\max\left( \| (\Xhat_0 - \E[\Xhat_{0}|\alpha])^\top \Ustar \|_F , \| (\Xhat_0 - \E[\Xhat_{0}|\alpha])  \Bcheck^\top \|_F \right) \le 0.44 \delta_0 \sigmin
		$
	}
	\item 
	$\min_k  \w_k(\alpha) \ge 0.9
	$
	
	The first inequality is an immediate consequence of $\alpha \in \mathcal{E}$ and the second follows by Fact \ref{betak_bnd}.
	\ei

	Plugging the above bounds into \eqref{Wedin_main} of Lemma \ref{Wedinlemma}, conditioned on $\alpha$, for any $\alpha \in \mathcal{E}$, w.p. at least $ 1 - p_0$,
	$\SEF(\U_0,\Ustar)  \le  \frac{0.44 \delta_0}{0.9 - 0.44 \delta_0} < \delta_0$ since $\delta_0 < 0.1$. In other words,
	\begin{align}
		& \Pr\left( \SEF(\U_0,\Ustar)  \ge  \delta_0 | \alpha \right) \le p_0 \ \text{for any $\alpha \in \mathcal{E}$}.
		\label{SD_given_alpha_bnd}
	\end{align}
	Since (i)
	$
	\Pr( \SEF(\U_0,\Ustar)  \ge \delta_0 )
	\le \Pr( \SEF(\U_0,\Ustar)  \ge  \delta_0 \text{ and }  \alpha \in \mathcal{E}) +  \Pr (\alpha \notin \mathcal{E}),
	$ and
	(ii)
	$
	\Pr( \SEF(\U_0,\Ustar)  \ge \delta_0 \text{ and } \alpha \in \mathcal{E} ) \le \Pr(\alpha \in \mathcal{E}) \max_{\alpha \in \mathcal{E}}\Pr( \SEF(\U_0,\Ustar)  \le  \delta_0 |\alpha ),
	$
	thus, using Fact \ref{sumyik_bnd} and \eqref{SD_given_alpha_bnd}, we can conclude that
	\[
	\Pr \left( \SEF(\U_0,\Ustar)  \ge \delta_0 \right) \le p_0 (1 - p_\alpha)  + p_\alpha \le p_0 + p_\alpha
	\]
	
	Thus, for a $\delta_0 < 0.1$, $\SEF(\U_0,\Ustar) < \delta_0$ w.p. at least $ 1- p_0 - p_\alpha = 1 - 2\exp( (n+q)- c mq \delta_0^2\sigmin{}^2 /\mu^2\kappa^2(r\sigmax{}^2 + q\sigma_{\n}^2)  ) - 2\exp( n r - c mq \delta_0^2\sigmin{}^2 / \mu^2\kappa^2(r\sigmax{}^2 + q\sigma_{\n}^2) ) - 2\exp( q r - c mq \delta_0^2\sigmin{}^2 / \mu^2\kappa^2(r\sigmax{}^2 + q\sigma_{\n}^2) ) - \exp(- c mq \delta_0^2\sigmin{}^2  /\mu^2\kappa^2(r\sigmax{}^2 + q\sigma_{\n}^2) ) $.
	This is $\ge 1 - 5 \exp(-c(n+q))$ if $mq > C \kappa^2 \mu^2 (n+q)(r\sigmax{}^2 + q\sigma_{\n}^2) / \delta_0^2\sigmin{}^2$. This finishes our proof.
\end{proof}

\end{document}